\providecommand{\algorithmname}{Algorithm}
\algnewcommand{\Parameters}[1]{\item[\textbf{Parameters:}]{#1}}
\algnewcommand{\Init}[1]{\item[\textbf{Initialize:}]{#1}}
\DeclareMathOperator{\sgn}{sgn}
\begin{document}

\title{Uniswap Liquidity Provision: An Online Learning Approach}

\author{Yogev Bar-On\inst{1} \and Yishay Mansour\inst{1,2}}
\authorrunning{Bar-On \& Mansour}

\institute{Blavatnik School of Computer Science, Tel Aviv University, Tel Aviv, Israel \and
Google Research, Tel Aviv, Israel\\
\email{\{baronyogev,mansour.yishay\}@gmail.com}}

\maketitle

\begin{abstract}
Decentralized Exchanges (DEXs) are new types of marketplaces leveraging Blockchain technology. They allow users to trade assets with Automatic Market Makers (AMM), using funds provided by liquidity providers, removing the need for order books. One such DEX, Uniswap v3, allows liquidity providers to allocate funds more efficiently by specifying an active price interval for their funds. This introduces the problem of finding an optimal strategy for choosing price intervals. We formalize this problem as an online learning problem with non-stochastic rewards. We use regret-minimization methods to show a liquidity provision strategy that guarantees a lower bound on the reward. This is true even for non-stochastic changes to asset pricing, and we express this bound in terms of the trading volume.
\end{abstract}

%%%%%%%%%%%%%%%%%%%%%%%%%%%%%%%%%%%%%%%%%%%%%%%%%%%%%%%%%%%

\section{Introduction}
In recent years, there has been an increasing interest in Decentralized Finance (DeFi) since it opened the possibility for novel financial services that do not rely on a centralized entity, using Blockchain technology. One of the more prominent DeFi applications is \textit{Decentralized Exchanges} (DEXes), and specifically \textit{Automated Market Maker} (AMM) DEXes. Using AMMs, traders can trade between two or more assets without needing an entity on the other side of the trade to agree on a price.

Specifically, \textit{liquidity providers} (LPs) provide funds for a public pool of two assets. Traders can then send funds in the form of one asset to this pool, in exchange for funds in the form of a second asset. The pool automatically determines the price using only the data available on-chain, without outside sources, and includes a trading fee paid to the liquidity providers.

Uniswap, at the time of writing, is one of the largest AMM DEXes, holding reserves worth more than 3 Billion US dollars on Ethereum alone \cite{uniswapinfo}. Originally \cite{Adams2020UniswapVC}, liquidity providers on Uniswap provided two assets for a liquidity pool, and those funds could have been used by traders no matter how the price changed. This is not an efficient way of fund allocation, since a large portion of the funds would never be used, reserved for unrealistic prices.

Uniswap v3 \cite{adams2021uniswap} solves this problem by introducing a new type of AMM, which allows investors to allocate liquidity in specific price ranges, instead of allowing trades in any possible price. Using this concentrated liquidity, LPs can earn more money on their investment, given a correct prediction of market prices. However, when the market price is not in their invested range, LPs lose money on potential trades. Hence, there is great interest in strategies for choosing the correct price range for each investment.

Several attempts have been made to find the optimal liquidity provision strategy given some belief on the distribution of future asset prices. Still, no work has been done on finding an LP strategy for adversarial prices, where we do not assume anything about how the future price of an asset will change.

In this work, we consider the problem of liquidity provision in Uniswap v3 as an online learning problem in the non-stochastic setting. This allows us to formalize LP strategies as predictions with expert advice \cite{cesa1997use,cesa2006prediction}. We can construct a liquidity provision algorithm using regret-minimization methods that admit a positive reward for liquidity providers. This algorithm assumes almost nothing about how future asset prices will change. Moreover - we can express a lower bound on the reward using the trading volume during the investment time frame.

\subsection{Related Work}
Liquidity provision strategies on Uniswap v3 is an active research area. Neuder et al. \cite{neuder2021strategic} studied optimal strategies where prices evolve according to a Markov chain, and Fan et al. \cite{fan2022differential} expanded on this work with a better analysis that also takes impermanent loss into account. Heimbach et al. \cite{heimbach2022risks} also formalize the problem and strategy analysis in a Black-Scholes stochastic market model. Fritch \cite{fritsch2021concentrated} measures the performance of liquidity providers based on empirical data. More recently, Milionis et al. \cite{milionis2022quantifying} investigate the rewards of liquidity providers in an idealized Black-Scholes stochastic market, and provide a predictable running cost of providing liquidity. Cartea et al. \cite{cartea2022decentralised} show an optimal liquidity provision strategy where the reward is compared to a self-financing portfolio in a risk-free account.

Regarding liquidity pool mechanics, Angeris et al. \cite{angeris2019analysis} showed that Uniswap v2 pools closely track real asset prices under rational assumptions. This result is generalized for more settings in \cite{angeris2022constant,angeris2020improved}. Engel and Herlihy \cite{engel2021composing} formalized and studied the effects of composing multiple AMMs for a trade.

Using online learning for market making was already considered for traditional finance. Chen and Vaughan \cite{chen2010new} showed the connection between cost-function based prediction markets and no-regret learning. Penna and Reid \cite{della2011bandit} studied using bandit algorithms for automatic market making. Abernethy et al. \cite{abernethy2013efficient} provided a general optimization framework for the design of securities markets, and \cite{abernethy2013adaptive} showed an adaptive market-making algorithm based on the order-book spread, using online learning techniques. More recently, Spooner et al. \cite{spooner2018market} improved on those results using a more generic reinforcement learning framework.

\subsection{Outline}
In the next section, we overview AMM mechanisms on Uniswap. We then formally define our model for online liquidity provision strategies in Section \ref{sec:model}. Section \ref{sec:static_strategies} analyzes the reward of a special class of those strategies called static strategies. We use those strategies to present and analyze a dynamic regret-minimization strategy in Section \ref{sec:adaptive_strategy}. We then simulate our model with empirical data in Section \ref{sec:simulations}. We conclude our work in Section \ref{sec:conclusions}.

%%%%%%%%%%%%%%%%%%%%%%%%%%%%%%%%%%%%%%%%%%%%%%%%%%%%%%%%%%%

\section{Uniswap Overview}\label{sec:uniswap}
Any Uniswap liquidity pool holds two asset reserves, say token A reserve and token B reserve. Those assets are provided by liquidity providers, and traders can use the liquidity pool to trade between the two tokens.

\subsection{Constant Product Market Makers}
When trades are made on Uniswap v2, the pool maintains an invariant on the number of tokens it holds. Let $x>0$ and $y>0$ be the amounts of token A and token B in the pool, respectively. The amounts must satisfy $F(x,y)\triangleq xy=L^2$, where $L>0$, the pool's liquidity, cannot be changed by trades. This mechanism is called a Constant Product Market Maker (CPMM). Say the trader swaps $\Updelta x$ units of token A. They will receive $\Updelta y$ units of token B, such that $F(x+\Updelta x, y - \Updelta y)=L^2$. The motivation is supply \& demand: when a token is bought, its price goes up, and when it is sold, its price goes down.

\subsubsection{Pricing}
The \textit{spot price} is defined as the ratio of the amount of token B received to the amount of token A sent, for an infinitesimal trade: 
\[ p=\lim_{\Updelta x\to 0} \frac{\Updelta y}{\Updelta x}.\]
We can express this using the current reserves:
\[y-\Updelta y = \frac{L^2}{x+\Updelta x} = \frac{xy}{x+\Updelta x} \]
\[\frac{\Updelta y}{\Updelta x} = y \left(\frac{1}{\Updelta x} - \frac{x}{\Updelta x (x+\Updelta x)}\right) = \frac{y}{x+\Updelta x} \]
\[p=\lim_{\Updelta x\to 0} \frac{\Updelta y}{\Updelta x} = \frac{y}{x}.\]
and thus the spot price is simply the ratio of the pool's reserves. This means that as $L$ grows larger (more funds in the reserves), it takes a larger trading volume to affect the price. Also, you can see that for very large trades, the effective price can be much different from the spot price - this is called \textit{slippage}.

\subsubsection{Providing Liquidity}
Liquidity providers can always add liquidity to the pool, by providing a bundle of both token A and token B, such that the spot price is not changed. Since $xy=L^2$ and $p=\frac{y}{x}$, we get that:
\begin{align}\label{eq:uni2_reserves}
x=\frac{L}{\sqrt{p}}, && y=L\sqrt{p}.
\end{align}
Hence the liquidity grows linearly by the number of new tokens provided by an LP. If a liquidity provider added $x',y'$ tokens to the pool, such that $\frac{y'}{x'}=p$ and $x' y' = L'^2$, the total liquidity in the pool grows by $L'$. LPs can always withdraw their funds from the pool's reserve, but the amounts are proportional to the liquidity they provided to the pool and not the same as they initially provided. They can withdraw a bundle of $\frac{L'}{L+L'}x$ units of token A and $\frac{L'}{L+L'}y$ units of token B. This means price changes affect liquidity providers, and they might lose money from price volatility (compared to a situation where they did not invest at all); this is called \textit{impermanent loss}.

\subsubsection{Trading Fees}
The incentive for LPs is that when a trade is made, e.g., from token A to token B, the trader only receives $(1-\gamma)\Updelta y$ units of token B, where $\gamma \Updelta y$ is the \textit{trading fee} that goes towards liquidity providers. $\gamma$ is the \textit{fee tier} of the pool, and is usually between $0.05\%$ to $1\%$. The trading fee is distributed between all liquidity providers, proportional to their liquidity in the pool. Hence, if an LP provided $L'$ liquidity units to a pool with $L$ total units of liquidity, they will receive $\gamma\Updelta y \frac{L'}{L}$ units of token B as a trading fee. 

\subsection{Concentrated Liquidity}
A Uniswap v2 pool reserve will always have available funds, no matter the spot price. If a liquidity provider believes the spot price of a liquidity pool is contained in a limited interval, the liquidity allocation is wasteful. To be more efficient, Uniswap v3 introduces \textit{concentrated liquidity}.

In Uniswap v3, the spot price domain is divided into discrete intervals: $[d^i, d^{i+1})$ for all $i \in \mathbb{Z}$, where $d>1$ is the interval size (usually $1.0001\le d \le 1.01$, varies by a \textit{tick spacing} parameter). Each spot price interval acts as an independent CPMM liquidity pool, and liquidity providers can choose to provide liquidity only for certain intervals. This means LP's funds are allocated more efficiently, and they get more trading fees per tokens provided. However, if the spot price is outside the active interval of an LP, they will not earn any trading fees.

To achieve this, Uniswap v3 keeps track of \textit{virtual reserves} (see Figure \ref{fig:uni3_virtual}). Say LPs invest $x$ units of token A and $y$ units of token B in the price range $[a,b]$. The virtual reserves $x_v,y_v$ are defined to hold:
\begin{align*}
    x_v = x + \frac{L}{\sqrt{b}}, && y_v = y + \sqrt{a}L,
\end{align*}
where $L=\sqrt{x_v y_v}$ is the amount of liquidity provided. Basically, those are the reserves needed to get the same liquidity in a Uniswap v2 liquidity pool. Notice that when providing liquidity for $(0, \infty)$, the virtual reserves coincide with the real reserves, and thus this is equivalent to investing in a Uniswap v2 pool.
\begin{figure}[ht]
\includegraphics[width=0.67\textwidth]{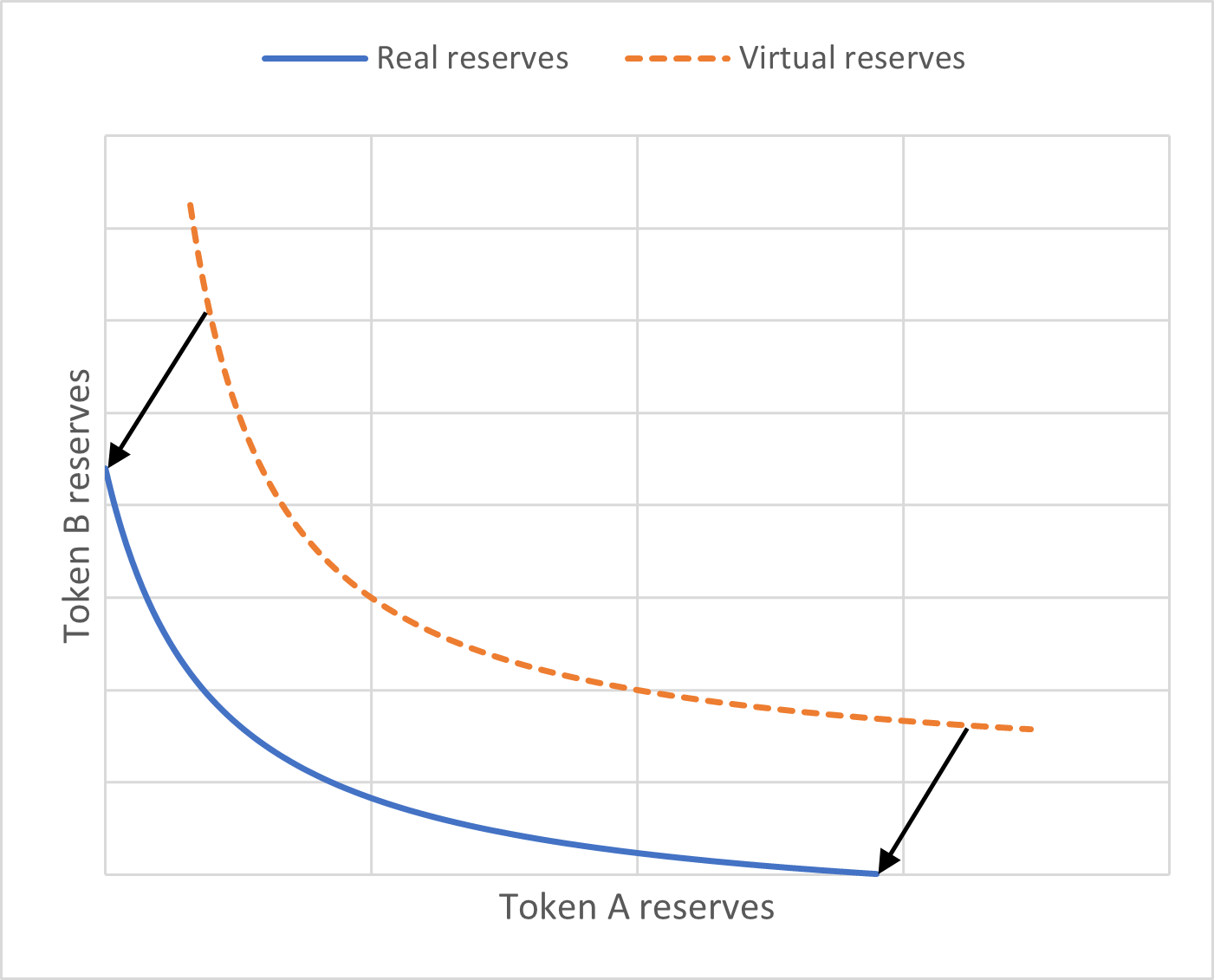}
\centering
\caption{Uniswap v3 virtual reserves illustration.}\label{fig:uni3_virtual}
\end{figure}

The trade price can then be computed using the virtual reserves as if it were a Uniswap v2 pool. From Eq. \ref{eq:uni2_reserves} we get that $x_v = \frac{L}{\sqrt{p}}$ and $y_v = \sqrt{p}L$, and thus:
\begin{align}\label{eq:uni3_reserves}
x = (\frac{1}{\sqrt{p}} - \frac{1}{\sqrt{b}})L, && y = (\sqrt{p}-\sqrt{a})L
\end{align}

What happens when the spot price leaves the active range? In the case $p>b$, we would have no real reserve of token A, and $(\sqrt{b}-\sqrt{a})L$ units of token B. In the same way if $p<a$, we would get $(\frac{1}{\sqrt{a}}-\frac{1}{\sqrt{b}})L$ units of token A. This can be a way for LPs to make limit-order trades, trading their entire investment to a single token once the price leaves a certain range.

%%%%%%%%%%%%%%%%%%%%%%%%%%%%%%%%%%%%%%%%%%%%%%%%%%%%%%%%%%%

\section{Online Learning Model}\label{sec:model}
We consider a Uniswap v3 Liquidity Pool, consisting of token A and token B with an interval size $d\in [1.0001,2)$ and a fee tier $\gamma < 1$. We now present the online learning strategy for liquidity provision, where at each time step, we re-invest our entire portfolio in a price range centered at the current spot price.

For each time step $t=1,...,T$, we invest a portfolio worth $M_t$ units of token B, consisting of $Q_t^{(A)} = \frac{M_t}{2 p_t}$ units of token A and $Q_t^{(B)} = \frac{M_t}{2}$ units of token B, where $p_t$ is the spot price of token A (in terms of token B) at step $t$; we pick a positive \textit{concentration controller} $n_t\in\mathbb{N}\cup\{\infty\}$ and provide liquidity in the range $[p_t d^{-n_t}, p_t d^{n_t}]$ (notice this assumes the interval size is fine enough to approximate spot prices as its exponents). Using Eq. \ref{eq:uni3_reserves}, at each step $t$ we thus have an active liquidity of
\[ L_t = \frac{Q_t^{(B)}}{\sqrt{p_t} - \sqrt{p_t d^{-n_t}}} = \frac{M_t}{2 \sqrt{p_t} (1-d^{-\frac{n_t}{2}})}. \]

When the step is over, a new price $p_{t+1}=p_t d^{\rho_t}$ (where $\rho_t \triangleq \log_d{\frac{p_{t+1}}{p_t}}$ is the \textit{logarithmic price change}) is revealed, along with the trading volume $v_t$. We then receive some reward $r_t(n_t)=\ln{\frac{M_{t+1}}{M_t}}$, re-balance our funds to match the new price, and invest again. Our goal is to choose optimal concentration controllers for maximizing the total reward:
\[ G_T(n_1,...,n_T) = \ln{\frac{M_T}{M_1}} = \sum_{t=1}^T{r_t(n_t)}. \]

Note that the reward is defined in terms of token B, so it holds a hidden assumption that we care (w.l.o.g.) about maximizing our holdings in terms of token B (e.g., maximizing the number of dollars we have rather than the number of euros).

\subsection{Assumptions}

We first make a few assumptions to simplify our model and analysis.

\subsubsection{Total Liquidity}
We assume the total active liquidity $\mathbb{L}_t$ at time step $t$ remains constant throughout the step, and we define $K_t$ as the equivalent virtual reserves for that amount of liquidity. Hence we get from Eq. \ref{eq:uni2_reserves}:
\[K_t = 2\sqrt{p_t}\,\mathbb{L}_t. \]

It is useful to express the volume in terms of those virtual reserves, leading us to define the \textit{relative trading volume} as the relation between the trading volume and the active virtual reserves:
\[ u_t \triangleq \frac{v_t}{K_t} = \frac{v_t}{2\sqrt{p_t}\,\mathbb{L}_t}. \]

\subsubsection{Partial Fees}
We assume either the price is in our active range the entire step or not at all - meaning we get all the commission or none of it. In reality, the price can come in and out of the active range mid-step since we assume many trades can be made on the same step.

\subsubsection{Gas Price}
Gas prices are equivalent to adding a negative term, linear in $T$, to the total reward. To simplify our results, we ignore this term and assume re-balancing and re-investing our funds does not incur a loss.

\subsection{Reward Function}
We will now show the form of the reward function at step $t$, given a concentration controller $n_t$, trading volume $v_t$, and price change $\rho_t$. Trivially, the reward consists of the trading fees $f_{v_t}(n_t)$ we get as a commission. But it also depends on our portfolio's value at the end of the step (before re-balancing), $M_{t\rightarrow t+1}$, which is different than the beginning. Hence:
\[ M_{t+1} = M_{t\rightarrow t+1} + f_{v_t}(n_t),\]
and we define the reward as the logarithmic change of value, measured in units of token B:
\[ r_t(n_t) \triangleq \ln{\frac{M_{t+1}}{M_t}} = \ln{\left(M_{t\rightarrow t+1} + f_{v_t}(n_t)\right)} - \ln{M_t}.\]

\subsubsection{Trading Fees}
The trading fee at step $t$ is linearly proportional to the trading volume $v_t$ and our active liquidity, and inversely proportional to the total active liquidity in the pool, with the trading fee $\gamma$ as the proportionality constant. Hence our commission is:
\[ f_{v_t}(n_t) = \begin{cases}
0 & |\rho_t| > n_t \\
\gamma v_t \frac{L_t}{\mathbb{L}_t} & \textrm{else}
\end{cases} = \begin{cases}
0 & |\rho_t| > n_t \\
\gamma u_t \frac{M_t}{( 1- d^{-\frac{n_t}{2}} ) } & \textrm{else}
\end{cases}.
\]

\subsubsection{Change in Value}

When we invest at step $t$ like our strategy suggests, our portfolio will consist of the following amount of tokens (see Eq. \ref{eq:uni3_reserves}):
\[
Q_{t\rightarrow t+1}^{(A)} =
\begin{cases}
0 & \rho_t > n_t \\
L_t \frac{1}{\sqrt{p_{t}}} (d^{\frac{n_t}{2}}-d^{-\frac{n_t}{2}}) & \rho_t < -n_t  \\
L_t \frac{1}{\sqrt{p_{t}}} (d^{-\frac{\rho_t}{2}}-d^{-\frac{n_t}{2}}) & \textrm{else}
\end{cases}
\]
\[
Q_{t\rightarrow t+1}^{(B)} =
\begin{cases}
L_t \sqrt{p_{t}} (d^{\frac{n_t}{2}}-d^{-\frac{n_t}{2}}) & \rho_t > n_t \\
0 & \rho_t < -n_t  \\
L_t \sqrt{p_{t}} (d^{\frac{\rho_t}{2}}-d^{-\frac{n_t}{2}}) & \textrm{else}
\end{cases}
\]
hence, at the end of step $t$, our portfolio will be worth:
\begin{flalign*}
M_{t\rightarrow t+1} &= p_{t+1}Q_{{t\rightarrow t+1}}^{(A)} + Q_{{t\rightarrow t+1}}^{(B)} \\
&= \begin{cases}
L_t \sqrt{p_{t}} (d^{\frac{n_t}{2}}-d^{-\frac{n_t}{2}}) & \rho_t > n_t \\
L_t \sqrt{p_{t}} d^{\rho_t} (d^{\frac{n_t}{2}}-d^{-\frac{n_t}{2}}) & \rho_t < -n_t \\
L_t \sqrt{p_t} \left( 2d^\frac{\rho_t}{2} - d^{-\frac{n_t}{2}} (1 + d^{\rho_t} )\right) & \textrm{else}
\end{cases} \\
&= \begin{cases}
M_t \frac{1 + d^{\frac{n_t}{2}}}{2} & \rho_t > n_t \\
M_t \frac{d^{\rho_t} (1 + d^{\frac{n_t}{2}})}{2} & \rho_t < -n_t \\
M_t \frac{2d^\frac{\rho_t}{2} - d^{-\frac{n_t}{2}} (1 + d^{\rho_t})}{2 ( 1- d^{-\frac{n_t}{2}} )} & \textrm{else}
\end{cases}.
\end{flalign*}

\subsubsection{Total Reward}

Summarizing, we get that:
\[ \frac{M_{t+1}}{M_t} =  \frac{M_{t\rightarrow t+1} + f_{v_t}(n_t)}{M_t} = \begin{cases}
\frac{1 + d^{\frac{n_t}{2}}}{2} & \rho_t > n_t \\
\frac{d^{\rho_t} (1 + d^{\frac{n_t}{2}})}{2} & \rho_t < -n_t \\
\frac{2d^\frac{\rho_t}{2} - d^{-\frac{n_t}{2}} (1 + d^{\rho_t}) + 2\gamma u_t}{2 ( 1- d^{-\frac{n_t}{2}} )} & \textrm{else}
\end{cases},
\]
and thus:
\begin{equation}\label{eq:reward}
r_t(n_t) = \ln{\frac{M_{t+1}}{M_t}} = \begin{cases}
\ln{\frac{1 + d^{\frac{n_t}{2}}}{2}} & \rho_t > n_t \\
\ln{\frac{d^{\rho_t} (1 + d^{\frac{n_t}{2}})}{2}} & \rho_t < -n_t \\
\ln{\frac{2d^\frac{\rho_t}{2} - d^{-\frac{n_t}{2}} (1 + d^{\rho_t}) + 2\gamma u_t}{2 ( 1- d^{-\frac{n_t}{2}} )}} & \textrm{else}
\end{cases}.
\end{equation}

As expected, using smaller liquidity intervals leads to larger rewards, but increases the chances of losing trading fees. More specifically:
\begin{itemize}
\item If $p_t>n_t$, there are no trading fees, but the reward is larger than $1$ since the value of token A is larger (in terms of token B). Once the price passed the active interval, all tokens were converted to type B - so the reward does not depend on the price, only on the concentration controller.
\item If $p_t<-n_t$, there are no trading fees, and the portfolio's value is smaller, so the reward is less than $1$. Here, the reward does depend on the price, since all tokens are converted to type A (while we measure the value in terms of token B).
\item If $-n_t\le p_t \le n_t$, the reward might be either greater or less than $1$. The term $2d^\frac{\rho_t}{2} - d^{-\frac{n_t}{2}} (1 + d^{\rho_t})$ is the effect of the price change on the reward, and $\gamma u_t$ is the effect of fees collected. The term $1- d^{-\frac{n_t}{2}}$ in the denominator signifies the fact that as the liquidity is concentrated in a smaller interval, the reward is larger.
\end{itemize}

%%%%%%%%%%%%%%%%%%%%%%%%%%%%%%%%%%%%%%%%%%%%%%%%%%%%%%%%%%%

\section{Static Strategies}\label{sec:static_strategies}

In this section, we will analyze $n$-static strategies in which we pick a constant concentration controller $n$ for all steps. We will denote the total reward for static strategies as $G_T(n) = G_T(n,...,n)$, and we will bound it in terms of the average logarithmic price change:
\[ P \triangleq \frac{1}{T}\sum_{t=1}^T{|\log_d{p_{t+1}} - \log_d{p_{t}}|} = \frac{1}{T}\sum_{t=1}^T{|\rho_t|}, \]
which is essentially the is the average number of price intervals skipped each step.

\subsection{$n=\infty$}
We first consider the $\infty$-static strategy, which will be useful as a baseline. This is equivalent to investing in a Uniswap v2 pool while re-balancing funds in each step.

\begin{lemma}\label{lem:v2_bound}
For any sequence of $\rho_t,u_t$ s.t. $u_t \leq \frac{2}{\gamma}$, the reward of the $\infty$-static strategy holds:
\[ G_T(\infty) \geq \frac{\gamma}{2}\sum_{t=1}^T{u_t} - \frac{T}{2}P \ln{d}. \]
\end{lemma}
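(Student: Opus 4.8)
The plan is to specialize the reward formula to $n_t=\infty$ and then reduce the statement to an elementary one–step inequality. Since $|\rho_t|>\infty$ never holds, every step lands in the third branch of Eq.~\ref{eq:reward}, and letting $n_t\to\infty$ makes $d^{-n_t/2}\to 0$, so the step reward collapses to
\[ r_t(\infty)=\ln\bigl(d^{\rho_t/2}+\gamma u_t\bigr). \]
Summing over $t$ and recalling that $\sum_{t=1}^{T}|\rho_t|=TP$, it then suffices to prove the per–step bound
\[ \ln\bigl(d^{\rho_t/2}+\gamma u_t\bigr)\ \ge\ \frac{\gamma}{2}u_t-\frac{\ln d}{2}\,|\rho_t| \qquad (t=1,\dots,T), \]
since adding these inequalities yields exactly the claimed bound on $G_T(\infty)=\sum_t r_t(\infty)$.

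For the per–step bound I would first get rid of the sign of $\rho_t$: because $d>1$ and $\rho_t\ge-|\rho_t|$ we have $d^{\rho_t/2}\ge d^{-|\rho_t|/2}$, and factoring $d^{-|\rho_t|/2}$ out of the logarithm gives
\[ \ln\bigl(d^{\rho_t/2}+\gamma u_t\bigr)\ \ge\ \ln\bigl(d^{-|\rho_t|/2}+\gamma u_t\bigr)\ =\ \ln\bigl(1+\gamma u_t\,d^{|\rho_t|/2}\bigr)-\frac{\ln d}{2}\,|\rho_t|. \]
Since $d^{|\rho_t|/2}\ge 1$, the claim reduces to $\ln(1+\gamma u_t)\ge\frac{\gamma}{2}u_t$, and this is exactly where the hypothesis $u_t\le 2/\gamma$ is used: setting $x=\gamma u_t\in[0,2]$, the scalar function $x\mapsto\ln(1+x)-x/2$ vanishes at $x=0$, has derivative $\tfrac1{1+x}-\tfrac12$ which is positive on $[0,1)$ and negative on $(1,2]$, and is nonnegative at $x=2$ because $\ln 3>1$; hence it is nonnegative on all of $[0,2]$.

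I do not anticipate a serious obstacle; the two points needing care are (i) justifying the $n_t\to\infty$ limit of Eq.~\ref{eq:reward} — that it removes both discontinuous branches and the $d^{-n_t/2}$ terms of the surviving branch — and (ii) checking the scalar inequality $\ln(1+x)\ge x/2$ over the whole interval $[0,2]$ rather than just near the endpoints, for which the monotonicity–plus–endpoint argument above suffices. Chaining the two displayed inequalities and summing over $t$ then gives $G_T(\infty)\ge\frac{\gamma}{2}\sum_{t=1}^T u_t-\frac{T}{2}P\ln d$, as desired.
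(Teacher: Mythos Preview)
Your proposal is correct and follows essentially the same route as the paper: specialize Eq.~\ref{eq:reward} at $n_t=\infty$ to obtain $r_t(\infty)=\ln(d^{\rho_t/2}+\gamma u_t)$, replace $\rho_t$ by $-|\rho_t|$, factor out $d^{-|\rho_t|/2}$, and finish with $\ln(1+x)\ge x/2$ on $[0,2]$. The only cosmetic difference is that the paper bounds $d^{-|\rho_t|/2}+\gamma u_t\ge d^{-|\rho_t|/2}(1+\gamma u_t)$ directly, whereas you factor exactly and then use $d^{|\rho_t|/2}\ge 1$; these are the same step.
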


\begin{proof}
Taking the limit where $n_t \rightarrow \infty$ in Eq. \ref{eq:reward} and getting $d^{-\frac{n_t}{2}}=0$, we have:
\begin{flalign*}
G_T(\infty) &= \sum_{t=1}^T{r_t(\infty)} = \sum_{t=1}^T{\ln\left(d^\frac{\rho_t}{2} + \gamma u_t\right)}
\geq \sum_{t=1}^T{\ln\left(d^{-\frac{|\rho_t|}{2}} + \gamma u_t\right)} \\
&\geq \sum_{t=1}^T{\ln\left(d^{-\frac{|\rho_t|}{2}}\left(1 + \gamma u_t\right)\right)}
= \sum_{t=1}^T{\ln\left( 1 + \gamma u_t \right)} - \frac{\ln{d}}{2}\sum_{t=1}^T{|\rho_t|} \\
&= \sum_{t=1}^T{\ln\left( 1 + \gamma u_t \right)} - \frac{T}{2}P \ln{d}
\geq \frac{\gamma}{2}\sum_{t=1}^T{u_t} - \frac{T}{2}P \ln{d}
\end{flalign*}
as desired, where the last inequality is since $\ln(1+x) \geq \frac{x}{2}$ for $x \leq 2$.
\qed\end{proof}

We can already see a lower bound on the trading volume which will ensure we get a positive reward:

\begin{corollary}\label{cor:v2_positive}
For any sequence of $\rho_t,u_t$ s.t. $u_t \leq \frac{2}{\gamma}$ and $\frac{1}{T}\sum_{t=1}^T{u_t} \geq \frac{1}{\gamma}P \ln{d}$, the $\infty$-static strategy reward is positive:
\[ G_T(\infty) \geq 0. \]
\end{corollary}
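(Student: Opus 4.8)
The plan is to derive Corollary~\ref{cor:v2_positive} as an immediate consequence of Lemma~\ref{lem:v2_bound}. Since the hypothesis $u_t \leq \frac{2}{\gamma}$ is exactly the condition required by the lemma, we may directly invoke the bound
\[ G_T(\infty) \geq \frac{\gamma}{2}\sum_{t=1}^T{u_t} - \frac{T}{2}P \ln{d}. \]
It therefore suffices to show that the right-hand side is nonnegative under the additional volume hypothesis $\frac{1}{T}\sum_{t=1}^T u_t \geq \frac{1}{\gamma} P \ln d$.

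The key step is a one-line algebraic rearrangement: multiplying the volume hypothesis through by $\frac{\gamma T}{2}$ gives $\frac{\gamma}{2}\sum_{t=1}^T u_t \geq \frac{T}{2} P \ln d$, which is precisely the statement that the lower bound from the lemma is $\geq 0$. Chaining the two inequalities yields $G_T(\infty) \geq 0$, as desired.

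There is essentially no obstacle here — the corollary is a packaging of the lemma with the natural threshold on the average relative volume made explicit, so the only thing to be careful about is correctly propagating the $u_t \leq \frac{2}{\gamma}$ side condition (needed so that the $\ln(1+x)\geq x/2$ step in the lemma's proof is valid) and matching the constants $\frac{\gamma}{2}$ and $\frac{1}{2}\ln d$ when clearing denominators. I would present it as a short two-sentence proof that cites the lemma and performs the rearrangement.

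\begin{proof}
By Lemma~\ref{lem:v2_bound}, since $u_t \leq \frac{2}{\gamma}$ for all $t$, we have $G_T(\infty) \geq \frac{\gamma}{2}\sum_{t=1}^T u_t - \frac{T}{2} P \ln d$. The hypothesis $\frac{1}{T}\sum_{t=1}^T u_t \geq \frac{1}{\gamma} P \ln d$ is equivalent, after multiplying both sides by $\frac{\gamma T}{2}$, to $\frac{\gamma}{2}\sum_{t=1}^T u_t \geq \frac{T}{2} P \ln d$, and hence the right-hand side above is nonnegative, giving $G_T(\infty) \geq 0$.
\qed\end{proof}
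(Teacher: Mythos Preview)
Your proof is correct and matches the paper's approach: the corollary is stated immediately after Lemma~\ref{lem:v2_bound} with no separate proof, so it is understood to follow by exactly the rearrangement you give.
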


\subsection{$n < \infty$}
For analysis of the case where $n$ is finite, we denote by $T_{\leq n}$ the set of time steps where the price remained in the active interval, and by $T_{>n}$ the set of steps where the price left the active interval, i.e.,
\begin{align*}
T_{\leq n}=\lbrace t \mid |\rho_t|\leq n\rbrace  &&  T_{>n}=\lbrace t \mid |\rho_t|>n\rbrace.
\end{align*}

We show the following lower bound on the reward,
\begin{lemma}\label{lem:static_bound}
The reward of the $n$-static strategy holds:
\[ G_T(n) \geq \frac{n}{4}|T_{>n}|\ln{d} - T P\ln{d} + \sum_{|\rho_t|\leq n}{\ln\left( 1 + \frac{\gamma u_t}{1-d^{-\frac{n}{2}}}\right)}. \]
\end{lemma}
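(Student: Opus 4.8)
The plan is to lower-bound each summand of $G_T(n)=\sum_{t=1}^T r_t(n)$ separately, splitting the horizon into the out-of-range steps $T_{>n}$ and the in-range steps $T_{\le n}$, on which Eq.~\ref{eq:reward} has different forms, and then sum. Concretely I aim to prove $r_t(n)\ge \tfrac n4\ln d-|\rho_t|\ln d$ for every $t\in T_{>n}$, and $r_t(n)\ge -|\rho_t|\ln d+\ln\bigl(1+\tfrac{\gamma u_t}{1-d^{-n/2}}\bigr)$ for every $t\in T_{\le n}$. Adding these over all $t$, the first gives $\tfrac n4|T_{>n}|\ln d$ plus a piece of $-\ln d\sum_{t\in T_{>n}}|\rho_t|$, the second gives the logarithmic fee sum plus $-\ln d\sum_{t\in T_{\le n}}|\rho_t|$, and the two $\rho$-pieces combine into $-\ln d\sum_{t=1}^T|\rho_t|=-TP\ln d$, which is exactly the stated bound. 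Note throughout that $d>1$ and $n\ge 1$ force $0<d^{-n/2}<1$, so $1-d^{-n/2}>0$ and multiplying inequalities through by it is harmless.

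For $t\in T_{>n}$, Eq.~\ref{eq:reward} gives $r_t(n)=\ln\tfrac{1+d^{n/2}}{2}$ when $\rho_t>n$ and $r_t(n)=\rho_t\ln d+\ln\tfrac{1+d^{n/2}}{2}$ when $\rho_t<-n$. By the arithmetic--geometric mean inequality $\tfrac{1+d^{n/2}}{2}\ge\sqrt{d^{n/2}}=d^{n/4}$, so $\ln\tfrac{1+d^{n/2}}{2}\ge\tfrac n4\ln d$. In the first case $-|\rho_t|\ln d\le 0$; in the second $\rho_t=-|\rho_t|$; so in both cases $r_t(n)\ge\tfrac n4\ln d-|\rho_t|\ln d$.

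The substance is the in-range steps. For $t\in T_{\le n}$ write $\tfrac{M_{t+1}}{M_t}=C_t+\tfrac{\gamma u_t}{1-d^{-n/2}}$, where $C_t\triangleq\tfrac{M_{t\to t+1}}{M_t}=\tfrac{2d^{\rho_t/2}-d^{-n/2}(1+d^{\rho_t})}{2(1-d^{-n/2})}$ is the pure change-in-value factor. The key claim is that $C_t\ge d^{-|\rho_t|}$ whenever $|\rho_t|\le n$. Granting it, since $d^{-|\rho_t|}\le 1$ we get, writing $x\triangleq\tfrac{\gamma u_t}{1-d^{-n/2}}\ge 0$, that $C_t+x\ge d^{-|\rho_t|}+x\ge d^{-|\rho_t|}(1+x)$, and taking logarithms yields precisely $r_t(n)\ge -|\rho_t|\ln d+\ln\bigl(1+\tfrac{\gamma u_t}{1-d^{-n/2}}\bigr)$. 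To prove the claim I would clear the positive factor $2(1-d^{-n/2})$, reducing it to $2d^{\rho_t/2}-d^{-n/2}(1+d^{\rho_t})\ge 2d^{-|\rho_t|}(1-d^{-n/2})$, move the $d^{-n/2}$-terms to one side, and bound $d^{-n/2}\le d^{-|\rho_t|/2}$ (valid since $n\ge|\rho_t|$ and the bracketed factor is nonnegative). Setting $x\triangleq d^{|\rho_t|/2}\ge 1$, the case $\rho_t\ge 0$ then collapses to the elementary polynomial inequality $x^4-x^2-2x+2=(x-1)^2(x^2+2x+2)\ge 0$, and the case $\rho_t<0$ collapses, after cancelling a common factor $1-x^{-1}\ge 0$, to $2x^{-1}\ge d^{-n/2}(1+x^{-1})$, which holds because $d^{-n/2}\le x^{-1}$ and $1+x^{-1}\le 2$.

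The only genuine obstacle is the claim $C_t\ge d^{-|\rho_t|}$; everything else is termwise bookkeeping. It is worth flagging that the sharper bound $C_t\ge d^{-|\rho_t|/2}$ is \emph{false} in general — it already fails at $\rho_t=-n$, where $C_t=\tfrac12 d^{-n/2}(1+d^{-n/2})<d^{-n/2}$ — which is exactly why this lemma pays the full $-TP\ln d$ rather than the $-\tfrac T2 P\ln d$ appearing in Lemma~\ref{lem:v2_bound}. One could also approach the claim by noting $C_t$ is nondecreasing in $\rho_t$ on $[-n,n]$ (its derivative has the sign of $1-d^{(\rho_t-n)/2}$), which handles $\rho_t\ge 0$ at once since there $C_t\ge C_t(0)=1\ge d^{-|\rho_t|}$; but the negative-$\rho_t$ branch still requires the short estimate above, so I expect to present the unified substitution argument.
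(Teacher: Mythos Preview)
Your proof is correct and, at the skeletal level, matches the paper's: both split into out-of-range and in-range steps, both use $\tfrac{1+d^{n/2}}{2}\ge d^{n/4}$ for the former, and both reduce the latter to the single key inequality
\[
d^{|\rho_t|}\bigl(2d^{\rho_t/2}-d^{-n/2}(1+d^{\rho_t})\bigr)\ \ge\ 2\bigl(1-d^{-n/2}\bigr)\qquad(|\rho_t|\le n),
\]
which is exactly your claim $C_t\ge d^{-|\rho_t|}$ after clearing $2(1-d^{-n/2})$ and multiplying by $d^{|\rho_t|}$.

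Where you differ is in how that key inequality is established. The paper treats the left-hand side as a function $f(\rho_t)$, computes its derivative, argues by solving a quadratic that there are no interior critical points, and then checks the three candidate minima $\rho_t\in\{-n,0,n\}$ explicitly. You instead exploit $d^{-n/2}\le d^{-|\rho_t|/2}$ to decouple $n$ from $\rho_t$ and reduce to the elementary factorization $x^4-x^2-2x+2=(x-1)^2(x^2+2x+2)$ for $\rho_t\ge 0$ and a two-line estimate for $\rho_t<0$. Your route is shorter and entirely algebraic, avoiding the critical-point analysis; the paper's route, on the other hand, yields the matching \emph{upper} bound $f(\rho_t)\le d^{3n/2}-d^{n/2}$ (its Eq.~\ref{eq:term_bound}) as a free byproduct, which is later reused in Lemma~\ref{lem:bound_reward}. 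Your side remark that $C_t\ge d^{-|\rho_t|/2}$ fails at $\rho_t=-n$, and hence the full $-TP\ln d$ is unavoidable here, is a nice observation not present in the paper.
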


\begin{proof}
First, we will show a new representation of the total reward:
\begin{flalign}
G_T(n) &= \sum_{t=1}^T{r_t(n)} \\\label{eq:static_bound_1}
&= \sum_{\rho_t>n}{\ln\left( \frac{1+d^\frac{n}{2}}{2} \right)} + \sum_{\rho_t<-n}{\ln\left( d^{\rho_t} \frac{1+d^\frac{n}{2}}{2} \right)} + \sum_{|\rho_t|\leq n}{\ln\left( \frac{2d^\frac{\rho_t}{2} + 2\gamma u_t - d^{-\frac{n}{2}}(1+d^{\rho_t})}{2(1-d^{-\frac{n}{2}})} \right)} \\\label{eq:static_bound_2}
&= |T_{>n}| \ln\left( \frac{1+d^\frac{n}{2}}{2} \right) - \ln{d}\sum_{\rho_t<-n}{|\rho_t|} + \sum_{|\rho_t|\leq n}{\ln\left( \frac{2d^\frac{\rho_t}{2} + 2\gamma u_t - d^{-\frac{n}{2}}(1+d^{\rho_t})}{2(1-d^{-\frac{n}{2}})} \right)} \\\label{eq:static_bound_3}
&= |T_{>n}| \ln\left( \frac{1+d^\frac{n}{2}}{2} \right) - \ln{d}\sum_{\rho_t<n}{|\rho_t|} + \sum_{|\rho_t|\leq n}{\ln\left(d^{|\rho_t|} \frac{2d^\frac{\rho_t}{2} + 2\gamma u_t - d^{-\frac{n}{2}}(1+d^{\rho_t})}{2(1-d^{-\frac{n}{2}})} \right)},
\end{flalign}
where (\ref{eq:static_bound_1}) follows directly from (\ref{eq:reward}); in (\ref{eq:static_bound_2}) we use the fact that for negative $\rho_t$:
\[\ln\left( d^{\rho_t} \frac{1+d^\frac{n}{2}}{2} \right) = \ln\left(\frac{1+d^\frac{n}{2}}{2} \right)-|\rho_t|\ln{d},\] and in (\ref{eq:static_bound_3}) we simply add to middle summation the term $\ln{d}\sum_{|\rho_t|<n}{|\rho_t|}$, compensating it in the last summation by multiplying by $d^{|\rho_t|}$ inside the logarithm.

Since we look for a lower bound in terms of the average logarithmic price change $P$, we can complete the middle summation over all time steps, getting $TP\ln{d}$, and separate the volume term in the last summation using the fact that $d^{|\rho_t|}\geq 1$:
\begin{flalign}
G_T(n) &\geq |T_{>n}| \ln\left( \frac{1+d^\frac{n}{2}}{2} \right) - TP\ln{d} + \sum_{|\rho_t|\leq n}{\ln\left( d^{|\rho_t|} \frac{2d^\frac{\rho_t}{2} - d^{-\frac{n}{2}}(1+d^{\rho_t})}{2(1-d^{-\frac{n}{2}})} + \frac{\gamma u_t}{1-d^{-\frac{n}{2}}} \right)} \\\label{eq:static_bound_4}
&\geq |T_{>n}| \frac{n\ln{d}}{4} - TP\ln{d} + \sum_{|\rho_t|\leq n}{\ln\left( d^{|\rho_t|} \frac{2d^\frac{\rho_t}{2} - d^{-\frac{n}{2}}(1+d^{\rho_t})}{2(1-d^{-\frac{n}{2}})} + \frac{\gamma u_t}{1-d^{-\frac{n}{2}}} \right)}
\end{flalign}
where (\ref{eq:static_bound_4}) follows from $\ln\frac{1+e^{x}}{2} \geq \frac{x}{2}$.

Now, consider the term $f(\rho_t)=d^{|\rho_t|}\left(2d^\frac{\rho_t}{2} - d^{-\frac{n}{2}}(1+d^{\rho_t})\right)$ as a function of $\rho_t$. We want to find a lower bound on the domain $|\rho_t|<n$; we first look for a local minimum on the differentiable domain:
\[
f'(\rho_t) = d^{|\rho_t|}\ln{d}\left(\sgn(\rho_t)\left(2d^\frac{\rho_t}{2} - d^{-\frac{n}{2}}(1+d^{\rho_t})\right)
+ d^\frac{\rho_t}{2} - d^{\rho_t-\frac{n}{2}}\right) = 0 \\
\]
Hence,
\[ (2\sgn(\rho_t)+1)d^{\frac{n+\rho_t}{2}} - (1+\sgn(\rho_t))d^{\rho_t} - \sgn(\rho_t) = 0. \]
In the case that $\rho_t < 0$, we get $d^{\frac{n+\rho_t}{2}} - 1 = 0$, which is not solvable when $|\rho_t| < n$. For $\rho_t > 0$, we get $2d^{\rho_t} - 3d^{\frac{n+\rho_t}{2}} + 1 = 0$, which is a quadratic equation on $d^{\frac{\rho_t}{2}}$ with the solution:
\[  d^{\frac{\rho_t}{2}} = \frac{1}{4} (3d^{\frac{n}{2}} \pm \sqrt{9d^{n} - 8}). \]
Note that $\sqrt{9d^{n} - 8} \geq d^{\frac{n}{2}}$, so we get $d^{\frac{\rho_t}{2}} \geq d^{\frac{n}{2}}$ for the first solution, which is outside the domain. Also,
\[ \sqrt{9d^{n} - 8} = \sqrt{(3d^{\frac{n}{2}} - 4)^2 + 24(d^{\frac{n}{2}}-1)} \geq 3d^{\frac{n}{2}} - 4 \]
so we get $d^{\frac{\rho_t}{2}} \leq 1$ for the second solution, which is impossible since we assumed $\rho_t > 0$.

In conclusion, there is no local minimum on the differentiable domain, and $f(\rho_t)$ admits a lower bound either on the domain's boundary ($\rho_t = \pm n$) or at $\rho_t = 0$ (the only non-differentiable point). Evaluating $f$ at each of those points we get:
\begin{align*}
f(0) = 2(1-d^{-\frac{n}{2}}), && f(n) = d^{\frac{3n}{2}} - d^{\frac{n}{2}}, && f(-n) = d^{\frac{n}{2}} - d^{-\frac{n}{2}}.
\end{align*}
Note that:
\[f(-n) - f(0) = 2\left(\frac{d^\frac{n}{2} + d^{-\frac{n}{2}}}{2} - 1\right) \geq 0,\]
so $f(-n)\geq f(0)$. We can also see that $f(n)=f(-n)d^n\geq f(-n)$. To summarize, we have $f(n)\geq f(-n) \geq f(0)$, and thus:
\begin{equation}\label{eq:term_bound}
d^{\frac{3n}{2}} - d^{\frac{n}{2}} \geq d^{|\rho_t|}\left(2d^\frac{\rho_t}{2} - d^{-\frac{n}{2}}(1+d^{\rho_t})\right) \geq 2(1-d^{-\frac{n}{2}}).
\end{equation}
We can now plug this result in (\ref{eq:static_bound_4}) to conclude our proof:
\[
G_T(n) \geq \frac{n}{4}|T_{>n}|\ln{d} - T P\ln{d} + \sum_{|\rho_t|\leq n}{\ln\left( 1 + \frac{\gamma u_t}{1-d^{-\frac{n}{2}}}\right)}.
\]
\qed\end{proof}

We will now show $n=4P$ is a good choice for a concentration controller.

\begin{theorem}\label{thm:optimal_static}
For any sequence of $\rho_t,u_t$ s.t. $P\ln{d}\leq 1$ and $\frac{a}{\gamma}P^2\ln^2{d} \leq u_t \leq \frac{2}{\gamma}$, the $4P$-static strategy reward holds:
\[ G_T(4P) \geq |T_{\leq 4P}| \ln\left(1+\frac{a-2}{4}P\ln{d}\right) \]
\end{theorem}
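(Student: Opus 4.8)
The plan is to start from Lemma~\ref{lem:static_bound} with $n = 4P$ and control the three terms. Substituting $n=4P$, the first term becomes $\frac{n}{4}|T_{>n}|\ln d = P |T_{>4P}| \ln d$, and since $|T_{>4P}| + |T_{\leq 4P}| = T$, the combination of the first two terms is $P|T_{>4P}|\ln d - TP\ln d = -P|T_{\leq 4P}|\ln d$. So the bound collapses to
\[ G_T(4P) \geq -P|T_{\leq 4P}|\ln d + \sum_{|\rho_t|\leq 4P}{\ln\!\left(1 + \frac{\gamma u_t}{1 - d^{-2P}}\right)}. \]
The goal is then to show the summand, minus $P\ln d$, is at least $\ln\!\left(1 + \frac{a-2}{4}P\ln d\right)$ for each $t$ with $|\rho_t|\leq 4P$, after which the theorem follows by summing over the $|T_{\leq 4P}|$ such terms.

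First I would absorb the $-P\ln d$ into the logarithm: $-P\ln d = \ln(d^{-P})$, so the $t$-th contribution is $\ln\!\left(d^{-P}\bigl(1 + \frac{\gamma u_t}{1-d^{-2P}}\bigr)\right)$. Since $d^{-P} = d^{-2P/2}$ and I expect $d^{-P} \geq 1 - 2P\ln d$ (or a similar elementary inequality from $e^{-x}\geq 1-x$) not to be tight enough directly, the cleaner route is to lower-bound $d^{-P}\cdot\frac{\gamma u_t}{1-d^{-2P}}$: write $\frac{d^{-P}}{1-d^{-2P}} = \frac{1}{d^{P} - d^{-P}}$, and bound $d^P - d^{-P} \leq$ something linear in $P\ln d$ using the hypothesis $P\ln d \leq 1$ (e.g., $e^x - e^{-x} = 2\sinh x \leq c\, x$ on a bounded interval, with an explicit constant obtainable on $[0,1]$). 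This converts $\frac{\gamma u_t}{d^P - d^{-P}}$ into something of the form $\frac{\gamma u_t}{C P \ln d}$. Combined with the lower bound $u_t \geq \frac{a}{\gamma}P^2\ln^2 d$, this gives $\frac{\gamma u_t}{d^P-d^{-P}} \geq \frac{a}{C}P\ln d$, and choosing the constant $C$ appropriately (presumably $C$ works out so that $\frac{a}{C} - (\text{loss from }d^{-P} < 1) = \frac{a-2}{4}$) yields the claim. I would also need $d^{-P}\cdot 1 \geq d^{-P}$ handled — actually keeping the "$1$" inside contributes $d^{-P}$ which is close to but below $1$; the $-2$ in $a-2$ and the $4$ in the denominator are exactly the slack needed to swallow that deficit, so the bookkeeping must be done carefully.

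The main obstacle I anticipate is pinning down the exact elementary inequality for $d^P - d^{-P}$ (equivalently $2\sinh(P\ln d)$) on the range $0 \le P\ln d \le 1$ with the precise constant that makes the final bound come out as stated — in particular getting the factor $4$ and the $a-2$ to match requires choosing the Taylor-type estimate so that $\sinh x \le 2x$ on $[0,1]$ (true since $\sinh 1 \approx 1.175 < 2$, and $\sinh$ is convex) combined with $\frac{d^{-P}}{1-d^{-2P}} \ge \frac{1}{2\cdot 2 P\ln d} = \frac{1}{4P\ln d}$, giving $\frac{\gamma u_t}{1-d^{-2P}}d^{-P} \ge \frac{a}{4}P\ln d$, and then $d^{-P}\cdot 1$ together with a convexity/monotonicity argument accounts for the $-2$. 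A secondary (routine) check is confirming that all these steps respect the domain constraint $|\rho_t|\le 4P$ (needed only to guarantee $t \in T_{\le 4P}$, which we already assumed) and the upper bound $u_t \le 2/\gamma$ (needed so the logarithm's argument stays in a regime where no further inequality breaks down, though here it may not even be used). I would present the per-step inequality as a short sub-claim, prove it with the $\sinh$ estimate, and then sum.
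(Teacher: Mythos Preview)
Your overall architecture matches the paper's: apply Lemma~\ref{lem:static_bound} at $n=4P$, collapse the first two terms to $-P\,|T_{\le 4P}|\ln d$, and reduce to the per-step inequality
\[
d^{-P}\Bigl(1+\frac{\gamma u_t}{1-d^{-2P}}\Bigr)=\frac{1-d^{-2P}+\gamma u_t}{d^{P}-d^{-P}}\;\ge\;1+\frac{a-2}{4}\,P\ln d.
\]
The gap is in the elementary estimate you propose for this last step. Decoupling into $d^{P}-d^{-P}\le 4P\ln d$ (from $\sinh x\le 2x$ on $[0,1]$) and $d^{-P}\ge 1-P\ln d$ yields
\[
d^{-P}+\frac{\gamma u_t}{d^{P}-d^{-P}}\;\ge\;(1-P\ln d)+\frac{a}{4}P\ln d\;=\;1+\frac{a-4}{4}\,P\ln d,
\]
so you get $a-4$, not $a-2$. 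To land on $a-2$ via your route you would need $d^{-P}\ge 1-\tfrac12 P\ln d$, which is false already for small $P\ln d>0$ (since $e^{-x}=1-x+O(x^2)<1-x/2$). More generally, matching $\tfrac{a}{C}-D=\tfrac{a-2}{4}$ for all $a$ forces $C=4$ and $D=\tfrac12$, and the latter is impossible; no pair of linear bounds on $\sinh x$ and $e^{-x}$ separately can recover the stated constant.

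The paper avoids this by \emph{not} decoupling. It uses the single quadratic estimate $e^{x}-1-e^{-x}+e^{-2x}\le 2x^{2}$ for $x\le 1$, i.e.\ $d^{P}-d^{-P}\le (1-d^{-2P})+2P^{2}\ln^{2}d$. Keeping $1-d^{-2P}$ intact on the right lets it cancel against the same term in the numerator:
\[
\frac{1-d^{-2P}+aP^{2}\ln^{2}d}{d^{P}-d^{-P}}\;\ge\;\frac{1-d^{-2P}+aP^{2}\ln^{2}d}{(1-d^{-2P})+2P^{2}\ln^{2}d}\;=\;1+\frac{(a-2)\,P^{2}\ln^{2}d}{(1-d^{-2P})+2P^{2}\ln^{2}d},
\]
after which $1-d^{-2P}\le 2P\ln d$ and $P^{2}\ln^{2}d\le P\ln d$ finish the bound. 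So your plan is one inequality away from working, but that inequality is exactly where the constant $a-2$ comes from.
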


\begin{proof}

From our assumption on the trading volume and Lemma \ref{lem:static_bound} we get:
\begin{flalign*}
G_T(4P) &\geq |T_{>4P}|P\ln{d} - T P\ln{d} + \sum_{|\rho_t|\leq 4P}{\ln\left( 1 + \frac{\gamma u_t}{1-d^{-2P}} \right)} \\
&= \sum_{|\rho_t|\leq 4P}{\left(\ln\left( 1 + \frac{\gamma u_t}{1-d^{-2P}} \right) - P\ln{d}\right)} \\
&= \sum_{|\rho_t|\leq 4P}{\ln\left(\frac{1-d^{-2P} + \gamma u_t}{d^{P}-d^{-P}}\right)} \\
&\geq |T_{\leq 4P}| \ln\left(\frac{1-d^{-2P} + aP^2\ln^2{d}}{d^{P}-d^{-P}}\right).
\end{flalign*}
Note that $e^x - 1 - e^{-x} + e^{-2x} \leq 2x^2$ for $x \leq 1$. Since we assume $P\ln{d} \leq 1$, we also get that $d^P - 1 - d^{-P} + d^{-2P} \leq 2P^2\ln^2{d}$, and thus,
\begin{flalign*}
G_T(4P) &\geq |T_{\leq 4P}| \ln\left(1+(a-2)\frac{P^2\ln^2{d}}{1-d^{-2P} + 2P^2\ln^2{d}}\right) \\
&\geq |T_{\leq 4P}| \ln\left(1+(a-2)\frac{P^2\ln^2{d}}{1-d^{-2P} + 2P\ln{d}}\right).
\end{flalign*}
Using $d^{-2P} \geq 1-2P\ln{d}$:
\[ G_T(4P) \geq |T_{\leq 4P}| \ln\left(1+\frac{a-2}{4}P\ln{d}\right), \]
concluding our proof.
\qed\end{proof}

\begin{corollary}\label{cor:positive_static}
For any sequence of $\rho_t,u_t$ s.t. $\frac{2}{\gamma}P^2\ln^2{d} \leq u_t \leq \frac{2}{\gamma}$, the $4P$-static strategy reward is positive:
\[ G_T(4P) \geq 0. \]
\end{corollary}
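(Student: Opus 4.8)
The plan is to obtain this as an immediate specialization of Theorem~\ref{thm:optimal_static}. First I would set the free parameter $a = 2$ in that theorem: with this choice the theorem's volume hypothesis $\frac{a}{\gamma}P^2\ln^2{d} \leq u_t \leq \frac{2}{\gamma}$ becomes precisely the hypothesis $\frac{2}{\gamma}P^2\ln^2{d} \leq u_t \leq \frac{2}{\gamma}$ assumed in the corollary, so nothing extra is required on that front.

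Second, I would verify that the theorem's other hypothesis, $P\ln{d} \leq 1$, is automatically implied. Assuming $T \geq 1$ (the case $T = 0$ being vacuous since then $G_T = 0$), pick any step $t$; chaining the lower and upper volume bounds gives $\frac{2}{\gamma}P^2\ln^2{d} \leq u_t \leq \frac{2}{\gamma}$, hence $P^2\ln^2{d} \leq 1$, and since $P \geq 0$ and $\ln{d} > 0$ (because $d > 1$) this yields $P\ln{d} \leq 1$.

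With both hypotheses in place, applying Theorem~\ref{thm:optimal_static} with $a = 2$ gives
\[ G_T(4P) \geq |T_{\leq 4P}|\ln\left(1 + \frac{2-2}{4}P\ln{d}\right) = |T_{\leq 4P}|\ln{1} = 0, \]
which is exactly the claimed bound. The only point requiring any thought — the nearest thing to an obstacle here — is noticing that $P\ln{d} \leq 1$ need not be assumed separately because it is forced by the consistency of the two-sided volume constraint; the rest is a direct substitution into the already-established theorem.
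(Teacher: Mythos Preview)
Your proposal is correct and matches the paper's intended derivation: the corollary is stated immediately after Theorem~\ref{thm:optimal_static} without a separate proof, so it is meant to follow by setting $a=2$, exactly as you do. Your observation that the hypothesis $P\ln d\le 1$ is forced by the two-sided volume bound (so need not be assumed separately) is a clean justification for why the corollary drops that condition.
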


Compare this result to the $\infty$-static strategy. On the one hand, we assume a lower bound on the relative trading volume and not on the average trading volume. However, the required magnitude is smaller by a factor of $2P\ln{d}$, which is usually significantly smaller than $1$.

%%%%%%%%%%%%%%%%%%%%%%%%%%%%%%%%%%%%%%%%%%%%%%%%%%%%%%%%%%%

\section{Adaptive Strategy}\label{sec:adaptive_strategy}
We saw that static strategies can induce positive rewards. However, we don't know the average logarithmic price change $P$ in advance, so we cannot choose the appropriate concentration controller. We solve this problem by using a regret-minimization adaptive algorithm called Exponential Weights Algorithm (EWA), described in Algorithm \ref{alg:EWA}, to gain a reward similar to the optimal static strategy. Essentially, we are learning the optimal concentration controller adaptively, trying to maximize our reward while doing so.

\begin{algorithm}
\caption{\label{alg:EWA} Exponential Weights Adaptive Strategy}
\begin{algorithmic}[1]
\Parameters{$N\in \mathrm{N}, \eta>0$.}
\Init{$r_0(n)\leftarrow 0$ for all $1\leq n\leq N$.}

\For{$1 \le t\le T$}

\State{Set:
\[ p_{t}\left(n\right)\leftarrow\frac{\exp\left(\eta\sum_{0\leq\tau\leq t}{r_\tau(n)}\right)}{\mathlarger{\sum}_{1\leq \mu\leq N}{\exp\left(\eta\sum_{0\leq\tau\leq t}{r_\tau(\mu)}\right)}} \]
for all $1\leq n\leq N$.}

\State{For each concentration controller $1\leq n\leq N$, provide liquidity using $p_{t}\left(n\right)$ of the funds, and observe the current reward $r_t(n)$.}

\EndFor

\end{algorithmic}
\end{algorithm}

Since we saw $4P$ would be a good concentration controller, we want to ensure it is included in our options. Hence, we need to choose $N$ such that $N\geq 4P$. It is reasonable to assume $d^P \leq 2$, so it is enough to choose $N\geq \log_d{16}$. However we need $N$ to be a natural number, and because $d<2$, we can choose $N=\lfloor \log_d{32}\rfloor$.

Now, before we start the analysis, we need to bound the reward:
\begin{lemma}\label{lem:bound_reward}
Choose $N=\lfloor \log_d{32}\rfloor$. For any sequence of $\rho_t,u_t$ s.t. $d^{|\rho_t|} \leq 2$ and $u_t \leq \frac{2}{\gamma}$, we have:
\[R\triangleq\max_{t\leq T, n \leq N}{r_t(n)} - \min_{t\leq T, n \leq N}{r_t(n)} \leq 16\]
\end{lemma}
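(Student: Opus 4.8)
The quantity $R$ is the spread of the reward $r_t(n)$ over all steps $t \le T$ and all concentration controllers $1 \le n \le N$. My plan is to bound $r_t(n)$ from above and below separately, using the three-case formula (\ref{eq:reward}), under the standing assumptions $d^{|\rho_t|} \le 2$ and $u_t \le \frac{2}{\gamma}$, together with $d < 2$ and the choice $N = \lfloor \log_d 32 \rfloor$ (so $d^N \le 32$). Since $R \le \max_{t,n} r_t(n) - \min_{t,n} r_t(n)$, it suffices to show $r_t(n) \le C_1$ and $r_t(n) \ge -C_2$ with $C_1 + C_2 \le 16$; I would aim for something like $r_t(n) \le \ln 4 \approx 1.39$ and $r_t(n) \ge -\ln 2 \approx -0.69$ and note the gap is far below $16$, but the cleaner route is just to get crude bounds in each case.

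**Upper bound.** In the case $\rho_t > n$, we have $r_t(n) = \ln\frac{1+d^{n/2}}{2}$; since $n \le N$ and $d^{N} \le 32$, this is at most $\ln\frac{1+\sqrt{32}}{2} < \ln 4$. In the case $\rho_t < -n$, $r_t(n) = \ln\frac{d^{\rho_t}(1+d^{n/2})}{2}$; here $d^{\rho_t} \le 1$ (as $\rho_t < 0$), so this is again at most $\ln\frac{1+\sqrt{32}}{2} < \ln 4$. In the middle case $|\rho_t| \le n$, I need to bound $\frac{2d^{\rho_t/2} - d^{-n/2}(1+d^{\rho_t}) + 2\gamma u_t}{2(1 - d^{-n/2})}$. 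The numerator is at most $2d^{\rho_t/2} + 2\gamma u_t \le 2\sqrt{2} + 4$ using $d^{\rho_t} \le d^{|\rho_t|} \le 2$ and $\gamma u_t \le 2$; the denominator $2(1 - d^{-n/2})$ is bounded below — this is the one place where I must be careful, since $n = 1$ and small $d$ make $1 - d^{-n/2} = 1 - d^{-1/2}$ small. Using $d \ge 1.0001$ gives $1 - d^{-1/2} \approx 5 \times 10^{-5}$, which would blow up the bound; I suspect instead the paper intends to also exploit $d^{|\rho_t|} \le 2$ together with the fact that in the middle case the price-change term $2d^{\rho_t/2} - d^{-n/2}(1+d^{\rho_t})$ is itself comparable to $2(1-d^{-n/2})$ via the bound (\ref{eq:term_bound}) from Lemma \ref{lem:static_bound}, so that the ratio stays $O(1)$ independent of how small $1 - d^{-n/2}$ is. Concretely, (\ref{eq:term_bound}) with $\rho_t$ in place of $|\rho_t|$ — after accounting for the $d^{|\rho_t|}$ factor and $d^{|\rho_t|} \le 2$ — should give $2d^{\rho_t/2} - d^{-n/2}(1 + d^{\rho_t}) \ge d^{-|\rho_t|} \cdot 2(1 - d^{-n/2}) \ge (1 - d^{-n/2})$, hence the ratio without fees is $\ge \frac12$, and with fees we get an $O(1)$ upper bound as well once we combine it appropriately; this is the step I expect to require the most care.

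**Lower bound.** For the lower bound, the first two cases give $r_t(n) = \ln\frac{1+d^{n/2}}{2} \ge \ln\frac{1+1}{2} = 0$ and $r_t(n) = \ln\frac{d^{\rho_t}(1+d^{n/2})}{2} \ge \ln\frac{d^{\rho_t}}{1} \ge \rho_t \ln d \ge -|\rho_t|\ln d$; using $d^{|\rho_t|} \le 2$, i.e. $|\rho_t|\ln d \le \ln 2$, this is $\ge -\ln 2$. For the middle case, dropping the nonnegative fee term $2\gamma u_t$ and using (\ref{eq:term_bound}) as above, $\frac{2d^{\rho_t/2} - d^{-n/2}(1+d^{\rho_t})}{2(1-d^{-n/2})} \ge d^{-|\rho_t|} \ge \frac12$, so $r_t(n) \ge -\ln 2$ here too. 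Combining, $r_t(n) \in [-\ln 2, \ln 4]$ uniformly, so $R \le \ln 4 + \ln 2 = \ln 8 < 16$, comfortably. I would present it with these crude constants rather than optimizing, since the target $16$ is generous. The main obstacle, as flagged, is handling the middle-case denominator $1 - d^{-n/2}$ cleanly — the resolution is to never bound numerator and denominator independently but to use the already-proved inequality (\ref{eq:term_bound}) that ties them together.
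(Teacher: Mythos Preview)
Your lower bound is correct and is exactly what the paper does: in all three cases one gets $r_t(n)\ge -\ln 2$, and in the middle case this comes precisely from \eqref{eq:term_bound} after dropping the fee term.

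The gap is in your upper bound for the middle case. The inequality \eqref{eq:term_bound} ties the \emph{price-change} part $2d^{\rho_t/2}-d^{-n/2}(1+d^{\rho_t})$ to the denominator $2(1-d^{-n/2})$, but it says nothing about the \emph{fee} term $2\gamma u_t$. That term is an absolute constant (up to $4$ under the hypothesis $u_t\le 2/\gamma$) sitting over a denominator that can be as small as $2(1-d^{-1/2})$. With $n=1$ and $d=1.0001$ this gives a ratio of order $4\times 10^{4}$, so $r_t(n)$ can be around $\ln(4\times 10^4)\approx 10.6$, far above $\ln 4$. There is no way to ``combine appropriately'' to make this $O(1)$ independent of $d$; indeed the bound is genuinely $d$-dependent, and this is exactly why the model stipulates $d\ge 1.0001$ in Section~\ref{sec:model}.

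The paper's proof embraces this: it uses the upper half of \eqref{eq:term_bound} to bound the price-change part by $d^{-|\rho_t|}(d^{3n/2}-d^{n/2})\le d^{3N/2}$, adds the fee contribution $4$, and then bounds the denominator from below using $n\ge 1$ and $d\ge 1.0001$, obtaining
\[
r_t(n)\ \le\ \ln\frac{32^{3/2}+4}{2(1-1.0001^{-1/2})}\ \le\ 15.
\]
Together with $r_t(n)\ge -\ln 2$ this gives $R\le 15+\ln 2\le 16$, which is precisely why the lemma's constant is $16$ rather than something like $\ln 8$. Your final claim $r_t(n)\in[-\ln 2,\ln 4]$ is therefore false; the fix is to invoke the lower bound $d\ge 1.0001$ explicitly for the middle-case upper bound.
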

\begin{proof}
We evaluate each case in Eq. \ref{eq:reward}.

In the case that $\rho_t > n$, we have $r_t(n)=\ln{\frac{1 + d^{\frac{n}{2}}}{2}}$. It is easy to see that $r_t(n) \geq 0$, and also:
\[ r_t(n) \leq \ln{\frac{1 + d^{\frac{N}{2}}}{2}} \leq \ln{\frac{1 + d^{\frac{\log_d{32}}{2}}}{2}} = \ln{\frac{1+\sqrt{32}}{2}} \leq 2. \]

In the case that $\rho_t < -n$, we have $r_t(n)=\ln{\frac{d^{\rho_t} (1 + d^{\frac{n}{2}})}{2}}$. Since $\rho_t<-n$ we get $r_t(n)\leq 0$. Using our assumption that $|\rho_t| \leq \log_d{2}$, we get:
\[ r_t(n) \geq \rho_t\ln{d} \geq -\ln{2}. \]

Otherwise, if $|\rho_t| < n$, we have $r_t(n) = \ln{\frac{2d^\frac{\rho_t}{2} - d^{-\frac{n}{2}} (1 + d^{\rho_t}) + 2\gamma u_t}{2 ( 1- d^{-\frac{n}{2}} )}}$. For a lower bound we can set $u_t=0$, and get from Eq. \ref{eq:term_bound} that:
\[ r_t(n) \geq \ln{d^{-|\rho_t|}} = -|\rho_t|\ln{d} \geq -\ln{2}. \]
For an upper bound, we use the assumption that $u_t\leq \frac{2}{\gamma}$ and again use Eq. \ref{eq:term_bound}:
\[
r_t(n) \leq \ln{\frac{d^{-|\rho_t|}(d^{\frac{3n}{2}} - d^{\frac{n}{2}}) + 4}{2 ( 1- d^{-\frac{n}{2}} )}} \\
\leq \ln{\frac{d^{\frac{3n}{2}} + 4}{2 ( 1- d^{-\frac{n}{2}} )}},
\]
and since $1 \leq n \leq N \leq \log_d{32}$ and $d\geq 1.0001$:
\[ r_t(n) \leq \ln{\frac{d^{\frac{3N}{2}} + 4}{2 ( 1- d^{-\frac{1}{2}} )}} \leq \ln{\frac{32^{\frac{3}{2}} + 4}{2 ( 1 - 1.0001^{-\frac{1}{2}} )}} \leq 15. \]

Overall, we get that $-\ln{2} \leq r_t(n) \leq 15$ for all $t,n$, and thus
\[ R \leq 15 + \ln{2} \leq 16\]
as desired.
\qed\end{proof}

We will start the analysis by presenting a well-known Lemma \cite{cesa1997use,cesa2006prediction}, bounding the \textit{regret} of our adaptive strategy:
\begin{lemma}\label{lem:EWA_regret}
 The rewards of the Exponential Weights Adaptive Strategy hold:
\[ \sum_{t\leq T}{\sum_{1\leq \mu \leq N}{p_t(\mu)r_t(\mu)}} \geq G_T(n)-\frac{\ln{N}}{\eta}-\frac{\eta}{2} T R^2 \]
For all $1\leq n\leq N$, where $R$ is a bound on the range of the reward.
\end{lemma}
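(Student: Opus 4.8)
The plan is to run the classical potential-function argument for exponential weights: track the logarithm of the running sum of the (un-normalized) weights and bound its total change from both sides. Concretely, I would set $S_t(\mu) = \sum_{0\le\tau\le t} r_\tau(\mu)$, $W_t = \sum_{1\le\mu\le N} e^{\eta S_t(\mu)}$, and $\Phi_t = \ln W_t$. Reading the update so that $p_t$ uses only the rewards already observed (i.e.\ through round $t-1$), the played distribution is $p_t(\mu) = e^{\eta S_{t-1}(\mu)}/W_{t-1}$, and the $r_0(\cdot)=0$ initialization makes $\Phi_0 = \ln N$.

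The lower bound on $\Phi_T - \Phi_0$ is immediate: dropping all but one term in the sum defining $W_T$ gives $\Phi_T \ge \eta S_T(n) = \eta\,G_T(n)$ for every fixed $n$ (using $r_0(n)=0$, so that $S_T(n)=\sum_{t=1}^T r_t(n)=G_T(n)$), hence $\Phi_T - \Phi_0 \ge \eta\,G_T(n) - \ln N$.

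For the upper bound I would expand a single step,
\[ \Phi_t - \Phi_{t-1} = \ln \frac{\sum_\mu e^{\eta S_{t-1}(\mu)} e^{\eta r_t(\mu)}}{\sum_\mu e^{\eta S_{t-1}(\mu)}} = \ln \sum_{1\le\mu\le N} p_t(\mu)\, e^{\eta r_t(\mu)}, \]
which is the log-moment-generating function of $r_t(\cdot)$ under the distribution $p_t$. Since all rewards $r_t(\mu)$ lie in a common interval of width at most $R$, a Hoeffding-type bound on the mgf of a bounded random variable gives $\Phi_t - \Phi_{t-1} \le \eta \sum_\mu p_t(\mu) r_t(\mu) + \tfrac{\eta^2}{2} R^2$ (the sharp Hoeffding constant $\tfrac18$ would also suffice). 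Summing over $t=1,\dots,T$ telescopes the left-hand side to $\Phi_T - \Phi_0$.

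Combining the two bounds yields $\eta\,G_T(n) - \ln N \le \Phi_T - \Phi_0 \le \eta \sum_{t\le T}\sum_\mu p_t(\mu) r_t(\mu) + \tfrac{\eta^2}{2} T R^2$; dividing by $\eta>0$ and rearranging gives exactly the claimed inequality. There is no genuine obstacle here — this is the standard derivation from \cite{cesa1997use,cesa2006prediction}. The only care needed is bookkeeping: keeping the $r_0$ initialization consistent so that the comparator sum and the $\ln N$ term come out clean, and making sure the weights used at round $t$ involve only rewards up to round $t-1$ so that the telescoping produces $\sum_t\sum_\mu p_t(\mu)r_t(\mu)$ rather than an index-shifted variant.
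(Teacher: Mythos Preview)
The paper does not actually prove this lemma; it simply states it as a ``well-known Lemma'' and cites \cite{cesa1997use,cesa2006prediction}. Your proposal is exactly the standard potential-function argument from those references and is correct, including your remark about reading the weight update so that $p_t$ depends only on rewards through round $t-1$ (the indexing in Algorithm~\ref{alg:EWA} as displayed is slightly off in this respect, and your interpretation is the right one for the regret bound to go through).
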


Using this lemma, we can now get the main result of this work:

\begin{theorem}\label{thm:adaptive_bound}
Choose $N=\lfloor \log_d{32}\rfloor$ and $\eta = \sqrt{\frac{\ln{N}}{128T}}$. For any sequence of $\rho_t,u_t$ s.t. $d^{|\rho_t|} \leq 2$ and $\frac{10}{\gamma}P^2\ln^2{d} \leq u_t \leq \frac{2}{\gamma}$, the reward of the Exponential Weights Adaptive Strategy (Algorithm \ref{alg:EWA}), $G_T^{(\mathrm{EWA})}$, holds:
\[ G_T^{(\mathrm{EWA})} \geq \frac{3}{4}TP\ln{d} -23\sqrt{T\ln{\log_d{32}}}. \]
\end{theorem}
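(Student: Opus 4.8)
The plan is to reduce the statement to two ingredients already established: the generic EWA regret bound (Lemma \ref{lem:EWA_regret}) and the static guarantee $G_T(4P)\ge|T_{\le4P}|\ln(1+2P\ln d)$ obtained from Theorem \ref{thm:optimal_static} with $a=10$. The first thing I would do is relate the reward the algorithm actually earns to the linear mixture $\sum_{t\le T}\sum_{1\le\mu\le N}p_t(\mu)r_t(\mu)$ that appears in Lemma \ref{lem:EWA_regret}. At step $t$ the algorithm places a fraction $p_t(\mu)$ of its value $M_t$ under controller $\mu$; since $r_t(\mu)=\ln(M_{t+1}/M_t)$ is invariant under rescaling $M_t$, that slice is worth $p_t(\mu)M_t e^{r_t(\mu)}$ at the end of the step, so the algorithm's per-step log-reward is $\ln\sum_\mu p_t(\mu)e^{r_t(\mu)}$, which by convexity of $\exp$ (Jensen) is at least $\sum_\mu p_t(\mu)r_t(\mu)$. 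Summing over $t$ gives $G_T^{(\mathrm{EWA})}\ge\sum_{t\le T}\sum_{\mu}p_t(\mu)r_t(\mu)$.

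Next I would instantiate Lemma \ref{lem:EWA_regret} at the comparator $n=4P$ — admissible because $d^{|\rho_t|}\le2$ forces $P\le\log_d2$, hence $4P\le\log_d16\le N=\lfloor\log_d32\rfloor$ (I will suppress the rounding of $4P$ to an integer exactly as Theorem \ref{thm:optimal_static} does) — and combine it with the previous step and with $R\le16$ from Lemma \ref{lem:bound_reward} to get
\[ G_T^{(\mathrm{EWA})}\ \ge\ G_T(4P)-\frac{\ln N}{\eta}-128\,\eta T. \]
The choice $\eta=\sqrt{\ln N/(128T)}$ balances the last two terms at $2\sqrt{128\,T\ln N}=16\sqrt2\,\sqrt{T\ln N}\le 23\sqrt{T\ln\log_d32}$, using $N\le\log_d32$ and $16\sqrt2<23$.

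It then remains to prove $G_T(4P)\ge\frac34 TP\ln d$. The hypotheses $d^{|\rho_t|}\le2$ and $\frac{10}{\gamma}P^2\ln^2d\le u_t\le\frac2\gamma$ are exactly those of Theorem \ref{thm:optimal_static} with $a=10$ (note $P\ln d\le\ln2\le1$), so $G_T(4P)\ge|T_{\le4P}|\ln(1+2P\ln d)$. A Markov-type argument controls the bad steps: $4P\,|T_{>4P}|\le\sum_{|\rho_t|>4P}|\rho_t|\le\sum_t|\rho_t|=TP$, whence $|T_{>4P}|\le T/4$ and $|T_{\le4P}|\ge\frac34 T$. Finally, $\ln(1+2x)\ge x$ for $x\in(0,\ln2]$ (the difference vanishes at $x=0$, increases up to $x=\tfrac12$, and is still positive at $x=\ln2$), so with $x=P\ln d$ we get $G_T(4P)\ge\frac34 T\cdot P\ln d$; chaining the three displayed inequalities yields the theorem.

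The step I expect to be the main obstacle is the first one — rigorously arguing that the algorithm's realized log-reward dominates the \emph{linear} average $\sum_\mu p_t(\mu)r_t(\mu)$ — because this is precisely where the online-learning abstraction must be reconciled with the actual Uniswap dynamics: simultaneously holding many positions, re-balancing each step, and the "partial fees"/"gas price" simplifications. Everything afterwards (the comparator choice, the optimization over $\eta$, the Markov bound on $|T_{>4P}|$, and the scalar inequality $\ln(1+2x)\ge x$) is routine.
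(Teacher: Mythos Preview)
Your proposal is correct and follows essentially the same route as the paper: Jensen to pass from the algorithm's realized log-reward to the linear mixture, Lemma~\ref{lem:EWA_regret} with $R\le16$ and the stated $\eta$ to compare against $G_T(4P)$, Theorem~\ref{thm:optimal_static} with $a=10$, Markov to get $|T_{\le4P}|\ge\tfrac34T$, and the scalar bound $\ln(1+2x)\ge x$. Your write-up is in fact a bit more explicit than the paper's on the admissibility of the comparator $4P\le N$ and on the Markov step, but the structure and ingredients coincide.
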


\begin{proof}
Denote by $M_{t+1}(n)$ the value of the portfolio at $t+1$ if we were to provide liquidity at step $t$ using only the concentration controller $n$, such that $r_t(n) = \ln{\frac{M_{t+1}(n)}{M_t}}$. Thus, due to Jensen's inequality:
\begin{flalign*}
G_T^{(\mathrm{EWA})} &= \sum_{t\leq T}{\ln{\left(\sum_{1\leq n \leq N}{\frac{p_t(n)M_{t+1}(n)}{M_{t}}}\right)}} \\
&\geq \sum_{t\leq T}{\sum_{1\leq n \leq N}{p_t(n)\ln{\frac{M_{t+1}(n)}{M_{t}}}}} \\
&= \sum_{t\leq T}{\sum_{1\leq n \leq N}{p_t(n)r_t(n)}}.
\end{flalign*}

We can now use this result with Lemma \ref{lem:EWA_regret}. We substitute the $\eta$ we chose, set $n=4P \leq N$ and use $R$ from Lemma \ref{lem:bound_reward} to get:
\[
G_T^{(\mathrm{EWA})} \geq G_T(4P)-\sqrt{512T\ln{N}}
\geq G_T(4P)-23\sqrt{T\ln{N}}.
\]
From Theorem \ref{thm:optimal_static}, using $a=10$, we have:
\[
G_T^{(\mathrm{EWA})} \geq |T_{\leq 4P}|\ln{\left(1 + 2P\ln{d} \right)} - 23\sqrt{T\ln{N}} \geq |T_{\leq 4P}|P\ln{d} - 23\sqrt{T\ln{N}}.
\]
Using Markov's Inequality, we get $|T_{\leq 4P}| \geq \frac{3}{4}T$. Thus,
\[
G_T^{(\mathrm{EWA})} \geq \frac{3}{4}TP\ln{d} - 23\sqrt{T\ln{N}} \geq \frac{3}{4}TP\ln{d} -23\sqrt{T\ln{\log_d{32}}}
\]
as desired.
\qed\end{proof}

We can now see that for a large enough time period, the reward will be positive:

\begin{corollary}\label{cor:adaptive_positive}
Choose $N=\lfloor \log_d{32}\rfloor$ and $\eta = \sqrt{\frac{\ln{N}}{128T}}$. For any sequence of $\rho_t,u_t$ s.t. $d^{|\rho_t|} \leq 2$ and $\frac{10}{\gamma}P^2\ln^2{d} \leq u_t \leq \frac{2}{\gamma}$, the reward of the Exponential Weights Adaptive Strategy is positive for all sufficiently large $T$:
\[ G_T^{(\mathrm{EWA})} \geq 0. \]
\end{corollary}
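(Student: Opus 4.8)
The plan is to leverage Theorem~\ref{thm:adaptive_bound} directly, since it already provides the lower bound
\[ G_T^{(\mathrm{EWA})} \geq \frac{3}{4}TP\ln{d} -23\sqrt{T\ln{\log_d{32}}}. \]
The entire task reduces to showing that the right-hand side is nonnegative for all sufficiently large $T$. Rearranging, this is equivalent to requiring $\frac{3}{4}TP\ln{d} \geq 23\sqrt{T\ln{\log_d{32}}}$, i.e.\ $\sqrt{T} \geq \frac{92\sqrt{\ln{\log_d{32}}}}{3 P\ln{d}}$, which holds once $T \geq \left(\frac{92}{3P\ln{d}}\right)^2 \ln{\log_d{32}}$.

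First I would invoke Theorem~\ref{thm:adaptive_bound} under the stated hypotheses ($d^{|\rho_t|}\le 2$ and $\frac{10}{\gamma}P^2\ln^2 d \le u_t \le \frac{2}{\gamma}$), which are exactly the hypotheses of this corollary, to obtain the displayed bound. Then I would observe that the dominant term $\frac{3}{4}TP\ln{d}$ grows linearly in $T$ whereas the subtracted term $23\sqrt{T\ln{\log_d{32}}}$ grows only like $\sqrt{T}$, so their difference is eventually positive. The one thing to be slightly careful about is the degenerate case $P = 0$: if every $\rho_t = 0$ then the volume hypothesis forces $u_t \ge 0$ trivially and the bound reads $G_T^{(\mathrm{EWA})} \ge -23\sqrt{T\ln\log_d 32}$, which is not obviously nonnegative. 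However, when $P=0$ the reward in each step with $|\rho_t|\le n$ simplifies (all $\rho_t=0$), and one can either exclude this case as vacuous or note that $P\ln d \le 1$ with $P>0$ is the intended regime; I would add a line handling $P>0$ explicitly and remarking that $P=0$ is a non-interesting boundary case.

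The proof is therefore a short one: apply the theorem, then solve the scalar inequality for the threshold $T^\star := \left(\frac{92}{3P\ln d}\right)^2 \ln\log_d 32$ and conclude $G_T^{(\mathrm{EWA})} \ge 0$ for all $T \ge T^\star$. I do not expect any genuine obstacle here, since all the analytic work was already done in Theorem~\ref{thm:adaptive_bound}; the only mild subtlety is making the ``sufficiently large $T$'' quantifier precise and acknowledging the $P=0$ edge case, neither of which requires new ideas.
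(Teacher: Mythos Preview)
Your proposal is correct and matches the paper's intended argument: the paper states this corollary without proof, as an immediate consequence of Theorem~\ref{thm:adaptive_bound}, and your reasoning (linear term in $T$ eventually dominates the $\sqrt{T}$ regret term, with the explicit threshold $T^\star = \bigl(\tfrac{92}{3P\ln d}\bigr)^2 \ln\log_d 32$) is exactly the computation that justifies it. Your remark on the $P=0$ edge case is a nice bit of extra care that the paper itself does not address.
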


%%%%%%%%%%%%%%%%%%%%%%%%%%%%%%%%%%%%%%%%%%%%%%%%%%%%%%%%%%%
\section{Simulations}\label{sec:simulations}
We collected trading data from two of the largest Uniswap v3 liquidity pools on Ethereum. The first is USDT/USDC\footnote{Contract address: 0x3416cF6C708Da44DB2624D63ea0AAef7113527C6} with 0.01\% trading fee. This is an example of a \textit{stable coin} pool of two tokens representing a single US dollar and should not change in value. This pool is an example of low price volatility and low trading fee. The second liquidity pool is ETH/USDC\footnote{Contract address: 0x8ad599c3A0ff1De082011EFDDc58f1908eb6e6D8} with 0.3\% trading fee, which is an example of a pool with large price volatility, compensated by a high trading fee. In both pools, we look to maximize our value in USDC.

We simulated online static strategies across a wide range of concentration controllers, with hourly re-investments. We also used results from the static strategies to simulate the Exponential Weights Adaptive Strategy. All data were collected using a public Ethereum dataset on Google BigQuery across 2022, and our source code is publicly available at \href{https://github.com/yogi-bo/uniswap-v3-online-framework}{https://github.com/yogi-bo/uniswap-v3-online-framework}.

\subsection{Prices and Volumes}
We are interested to see how our assumptions hold on real-world data. Table \ref{table:real_data} summarizes the empirical data we gathered using the usual notation, and figures \ref{fig:usdt_usdc_data},\ref{fig:eth_usdc_data} shows the price changes and trading volumes across the entire year.

\begin{table}[h]
\centering
\begin{tabular}{ |c||c|c|c|c|c|c| }
 \hline
 Liquidity pool & $d$ & $\gamma$ & $\max_t{d^{|\rho_t|}}$ & $P$ & $\frac{P\ln{d}}{\gamma}$ & $\frac{1}{T}\sum_{t=1}^T{u_t}$ \\
 \hline
 USDT/USDC & $1.0001$ & $10^{-4}$ & $1.0099$ & $2.4\cdot 10^{-1}$ & $2.4\cdot 10^{-1}$ & $3.48\cdot 10^{-1}$ \\
 \hline
 ETH/USDC &  $1.00602$ & $3\cdot 10^{-3}$ & $1.11$ & $9.22\cdot 10^{-1}$ & $1.84$ & $3.62\cdot 10^{-3}$ \\
 \hline
\end{tabular}
\caption{Summary of real-world data for hourly time steps during 2022.}
\label{table:real_data}
\end{table}

First, we can tell that the assumption that $d^{|\rho_t|}<2$ is supported by the empirical data. Also, recall that $\frac{P\ln{d}}{\gamma}$ acts as a benchmark for the required trading volume for a positive reward. In USDT/USDC, we can see that indeed the average relative trading volume is larger, so we expect a positive reward. However, in ETH/USDC this is not true, so we don't expect to do well on this pool.

\begin{figure}
\centering
\includegraphics[width=.8\textwidth]{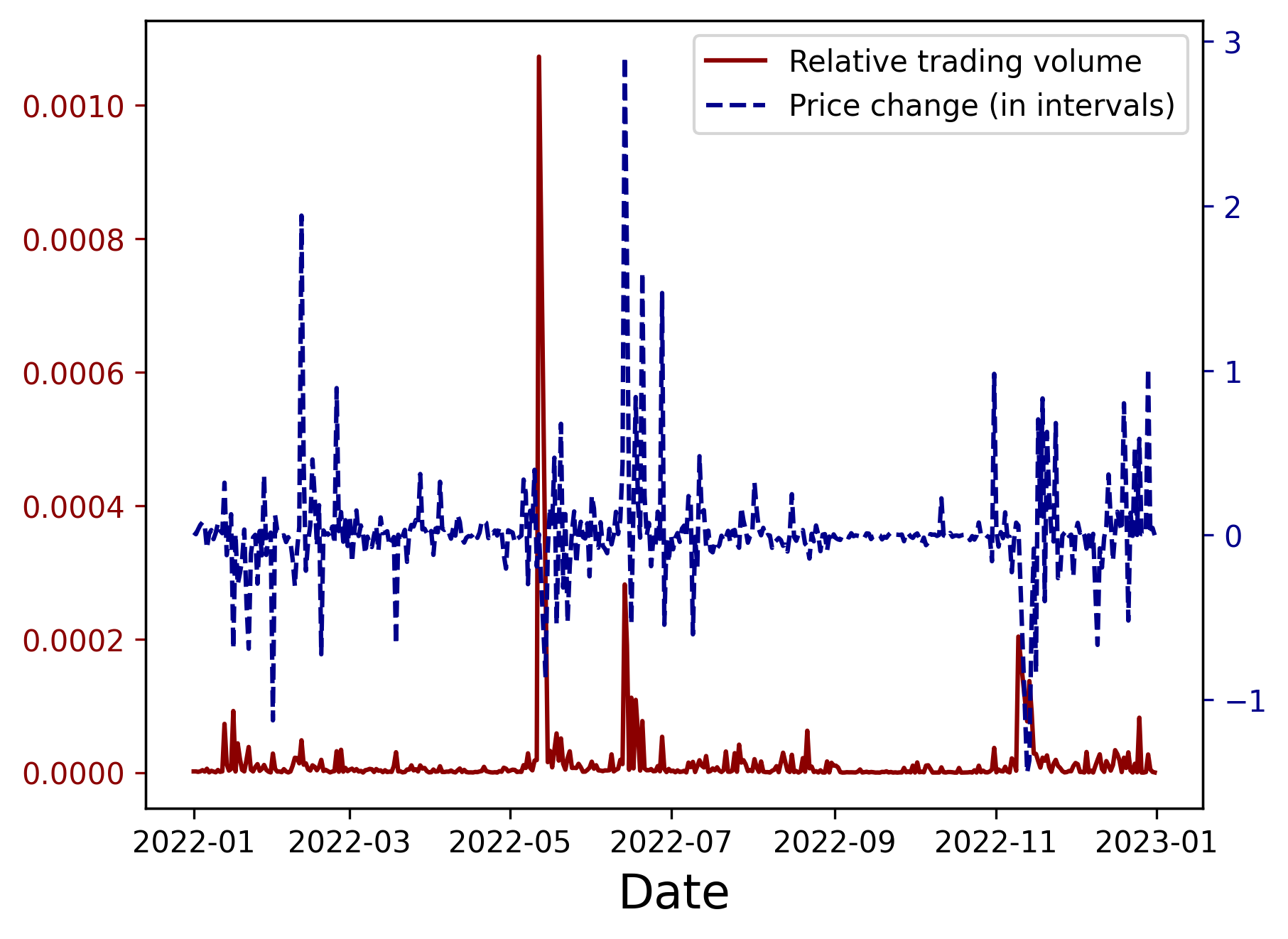}
\caption{USDT/USDC hourly price changes and relative trading volumes during 2022.}
\label{fig:usdt_usdc_data}
\end{figure}

\begin{figure}
\centering
\includegraphics[width=.8\textwidth]{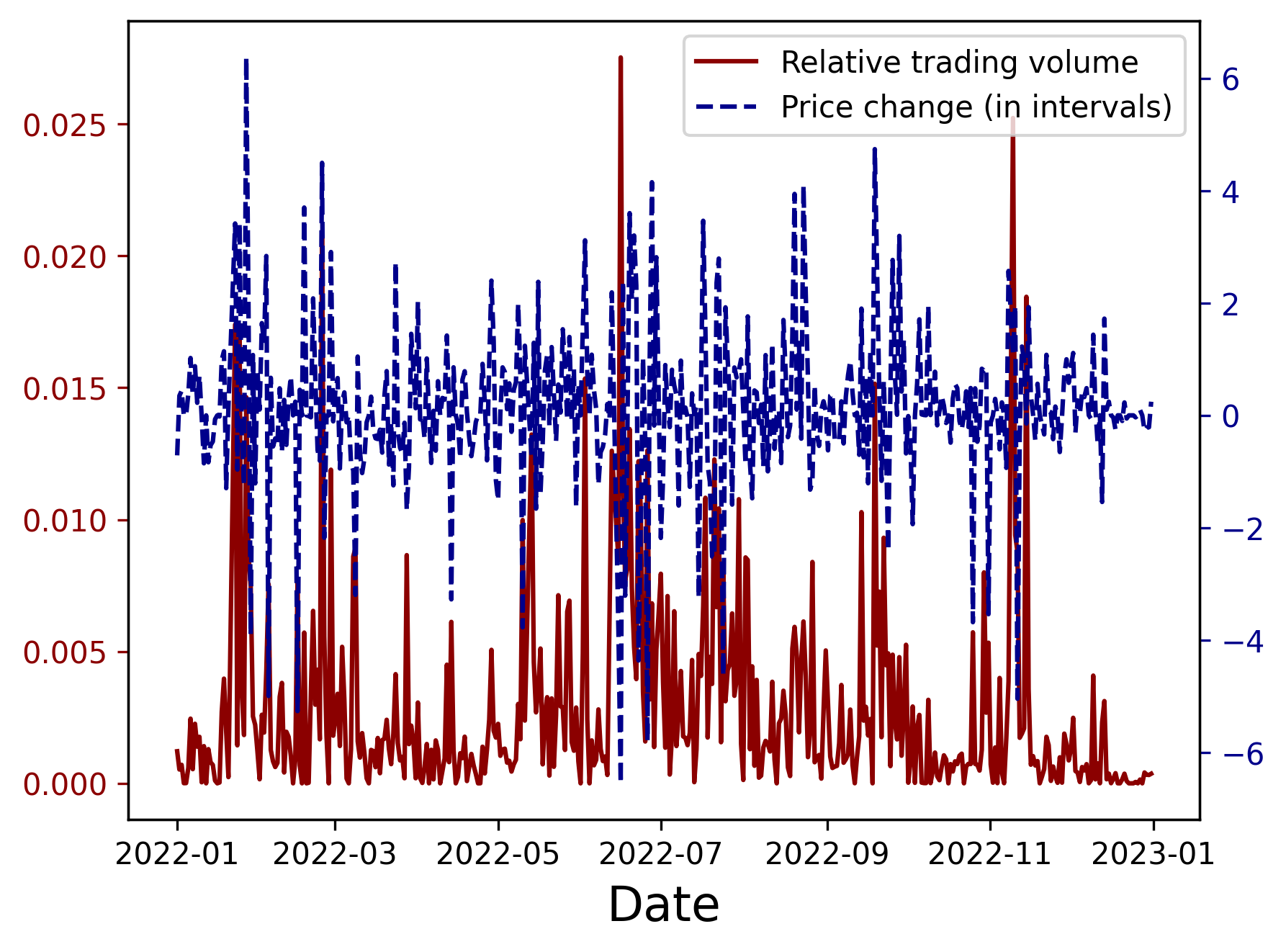}
\caption{ETH/USDC hourly price changes and relative trading volumes during 2022.}
\label{fig:eth_usdc_data}
\end{figure}

\subsection{Rewards}
We simulated the static strategy with exponentially growing concentration controllers. We also ran the Exponential Weights Adaptive Strategy using the simulated rewards of the static strategies (with an adjusted learning rate of $\eta = 10^3$). Table \ref{table:rewards} shows the total reward of each concentration controller, together with the reward of EWA. Figures \ref{fig:usdt_usdc_reward},\ref{fig:eth_usdc_reward} show the rewards across the entire year for chosen strategies.

\begin{table}[h]
\makebox[\textwidth][c]{
\begin{tabular}{ |c||c|c|c|c|c|c|c|c| }
\hline
\multicolumn{9}{|c|}{USDT/USDC} \\
\hline
Strategy & $n=10^0$ & $n=10^1$ & $n=10^2$ & $n=10^3$ & $n=10^4$ & $n=10^5$ & $n=10^6$ & EWA \\
\hline
Reward & $1.04\cdot 10^{-1}$ & $9.048\cdot 10^{-3}$ & $2.14\cdot 10^{-2}$ & $5.63\cdot 10^{-1}$ & $1.47\cdot 10^{-1}$ & $1.77\cdot 10^{-1}$ & $1.76\cdot 10^{-1}$ & $5.15\cdot 10^{-1}$\\
\hline
\end{tabular}
}

\bigskip

\makebox[\textwidth][c]{
\begin{tabular}{ |c||c|c|c|c|c|c|c|c| }
\hline
\multicolumn{9}{|c|}{ETH/USDC} \\
\hline
Strategy & $n=4^0$ & $n=4^1$ & $n=4^2$ & $n=4^3$ & $n=4^4$ & $n=4^5$ & $n=4^6$ & EWA \\
\hline
Reward & $-9.42\cdot 10^{-1}$ & $-6.01\cdot 10^{-1}$ & $-4.38\cdot 10^{-1}$ & $-4.5\cdot 10^{-1}$ & $-4.70\cdot 10^{-1}$ & $-4.73\cdot 10^{-1}$ & $-4.73\cdot 10^{-1}$ & $-4.86\cdot 10^{-1}$\\
\hline
\end{tabular}
}

\caption{Total reward for various static strategies and EWA during 2022.}
\label{table:rewards}
\end{table}

As we expected, the reward for USDT/USDC is positive, while the reward for ETH/USDC is negative, and EWA admits a reward close to the best static controller. We note that artificially increasing the trading volume by $10\%$ was enough for EWA to admit a positive reward for ETH/USDC as well.

\begin{figure}
\centering
\includegraphics[width=.8\textwidth]{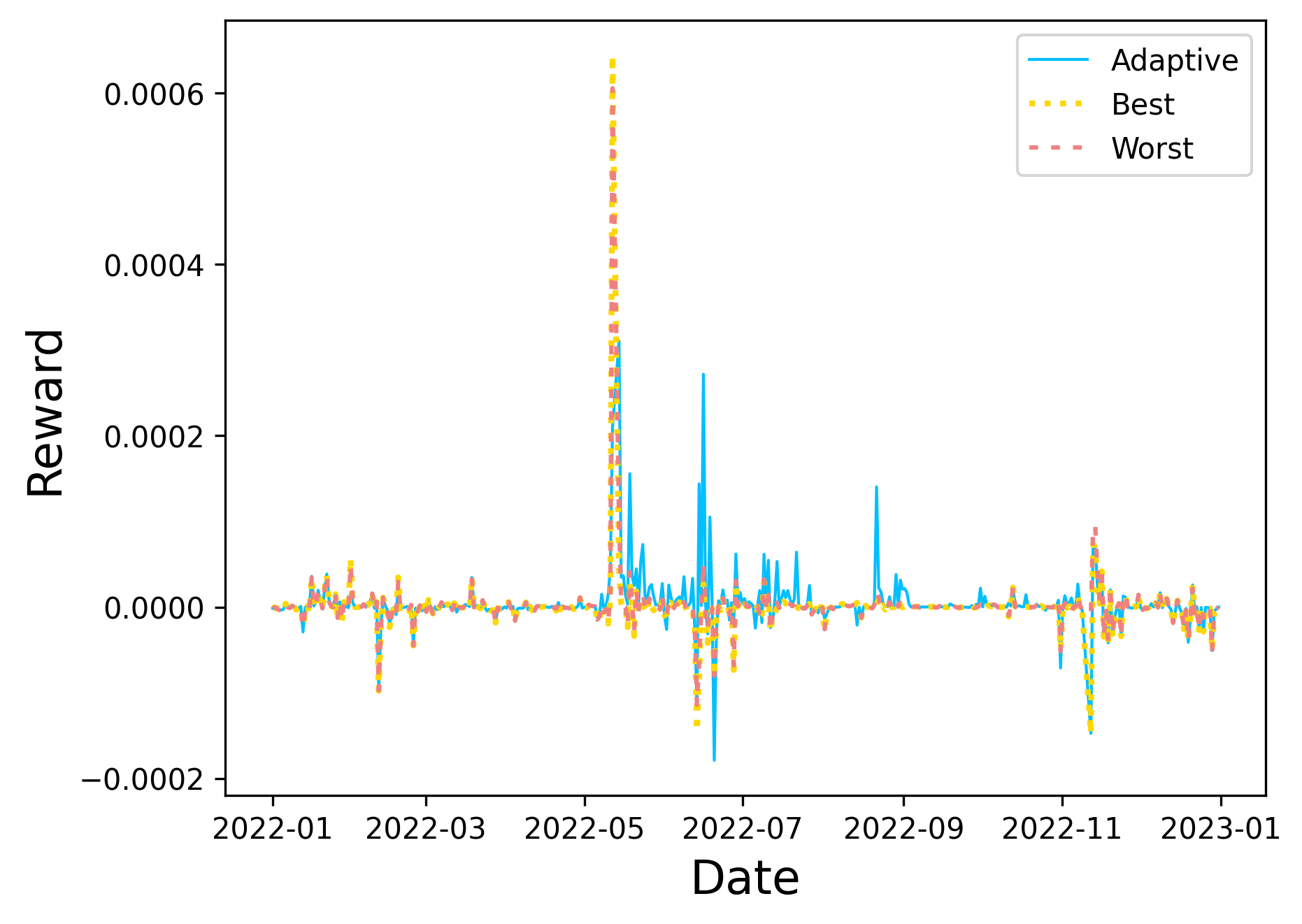}
\caption{USDT/USDC reward for best, worst, and adaptive concentration controllers.}
\label{fig:usdt_usdc_reward}
\end{figure}

\begin{figure}
\centering
\includegraphics[width=.8\textwidth]{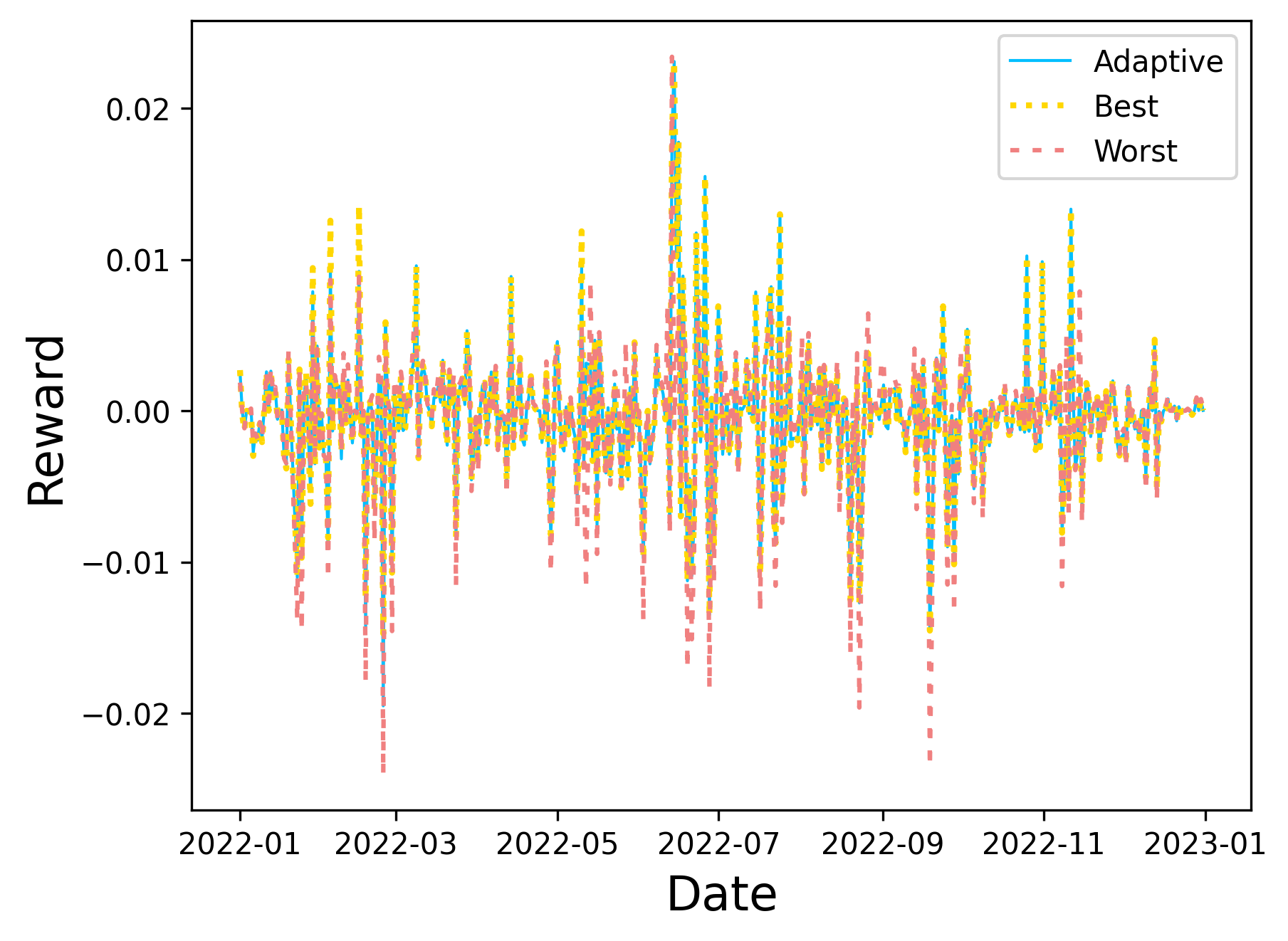}
\caption{ETH/USDC reward for best, worst, and adaptive concentration controllers.}
\label{fig:eth_usdc_reward}
\end{figure}

%%%%%%%%%%%%%%%%%%%%%%%%%%%%%%%%%%%%%%%%%%%%%%%%%%%%%%%%%%%

\section{Conclusions}\label{sec:conclusions}
Uniswap v3 introduced a novel Automated Market Maker mechanism that allows for complicated investment strategies for liquidity providers. Our work demonstrates the effectiveness of using online learning methods to analyze and create such strategies. We modeled the problem of Uniswap v3 liquidity provision as an online learning problem and analyzed the reward in the context of the new formalism. In our main result (Theorem \ref{thm:adaptive_bound}), we showed a dynamic strategy based on prediction with expert advice. We proved this strategy admits a positive total reward for a large enough trading volume, even in the face of non-stochastic price changes. We present the needed trading volume for a positive reward in terms of the average logarithmic price change.

Furthermore, using empirical data from real Uniswap liquidity pools, we showed the online learning model is a good approximation for the liquidity provision problem and can be used in real-life scenarios.

Future work could improve our analysis by improving the accuracy of our formal reward function. Such improvements may include partial trading fees for steps with a significant price change or gas fees for reinvestment.
%%%%%%%%%%%%%%%%%%%%%%%%%%%%%%%%%%%%%%%%%%%%%%%%%%%%%%%%%%%

\section*{Acknowledgments}
This project has received funding from the European Research Council (ERC) under the European Union’s Horizon 2020 research and innovation program (grant agreement No. 882396), the Israel Science Foundation (grant number 993/17), the Yandex Initiative for Machine Learning at Tel Aviv University and a grant from the Tel Aviv University Center for AI and Data Science (TAD).

%%%%%%%%%%%%%%%%%%%%%%%%%%%%%%%%%%%
\bibliography{main}

\end{document}